\newcommand{\be}{\begin{equation}}
\newcommand{\ee}{\end{equation}}
\newcommand{\bea}{\begin{eqnarray}}
\newcommand{\eea}{\end{eqnarray}}
\newcommand{\bsubeqs}{\begin{subequations}}
\newcommand{\esubeqs}{\end{subequations}}
\newcommand{\OO}{{\rm O}}
\newcommand{\Amin}{A_\mathrm{min}}
\def\luv{\ell_{\rm UV}}
\def\bml{\begin{subequations}}
\def\blea{\bml\begin{eqnarray}}
\def\eml{\end{subequations}}
\def\elea{\end{eqnarray}\eml}
\newcommand\restr[2]{{
  \left.\kern-\nulldelimiterspace 
  #1
  \vphantom{\big|}
  \right|_{#2} 
  }}
\newtheorem{theorem}{Theorem}[section]
\newtheorem{corollary}[theorem]{Corollary}
\newtheorem{prop}[theorem]{Proposition}
\theoremstyle{definition}
\newtheorem{definition}{Definition}[section]
\newtheorem{remark}{Remark}
\Crefname{lemma}{Lemma}{Lemmas}
\Crefname{corollary}{Corollary}{}
\Crefname{proposition}{Proposition}{Propositions}
\Crefname{conjecture}{Conjecture}{Conjectures}
\Crefname{equation}{Equation}{Equations}
\DeclareMathOperator{\arcsinh}{arcsinh}
\newcommand{\nord}[1]{{:}#1{:}}
\date{\today}
\begin{document}
\count\footins = 1000

\title{Penrose inequality for integral energy conditions}

\author{Eduardo Hafemann}
\thanks{\href{mailto:eduardo.hafemann@uni-hamburg.de}{eduardo.hafemann@uni-hamburg.de}}
\affiliation{Fachbereich Mathematik, Universit{\"a}t Hamburg, Bundesstra{\ss}e 55, 20146 Hamburg, Germany}

\author{Eleni-Alexandra Kontou}
\email{eleni.kontou@kcl.ac.uk}
\affiliation{Department of Mathematics, King’s College London, Strand, London WC2R 2LS, UK}

\begin{abstract}
The classical Penrose inequality, a relation between the ADM mass and the area of any cross section of the black hole event horizon, was introduced as a test of the weak cosmic censorship conjecture: if it fails, the trapped surface is not necessarily behind the event horizon and a naked singularity could form. Since that original derivation, a variety of proofs have developed, mainly focused on the initial data formulation on maximal spacelike slices of spacetime. Most of these proofs are applicable only for classical fields, as the energy conditions required are violated in the context of quantum field theory. In this work we provide two generalizations of the Penrose inequality for spherically symmetric spacetimes: a proof of a classical Penrose inequality using initial data and an average energy condition, and a proof of a modified Penrose inequality for evaporating black holes with a connection to the weak cosmic censorship conjecture. The latter case could also be applicable to quantum fields as it uses a condition inspired by quantum energy inequalities. Finally, we provide physically motivated examples for both. 

\bigskip\noindent{\small \textbf{Keywords.} Penrose Inequality, Negative Scalar Curvature, Quantum Black Holes, Dominant Energy Condition}
\end{abstract}

\maketitle
\newpage

\tableofcontents

\newpage

\section{Introduction}
\label{sec:introduction}

One of the most important open problems in general relativity is whether or not the weak cosmic censorship conjecture holds. The weak cosmic censorship, conjectured by Penrose~\cite{Penrose1969} asserts that naked singularities, those not hidden behind an event horizon, cannot occur. To test this hypothesis, Penrose derived an inequality~\cite{PenroseNakedSing1973} relating the Arnowitt–Deser–Misner (ADM) mass $m_{\mathrm{ADM}}$, the mass of a gravitating system defined in terms of the spacelike asymptotic behavior \cite{Arnowitt:1960es}, to the area $A$ of any cross section of the event horizon. Specifically, the inequality states 
\be
\label{eqn:Penroseoriginal}
m_{\mathrm{ADM}} \geq \sqrt{\frac{A}{16 \pi}} \,.
\ee

In his original work, Penrose connects this inequality with the weak cosmic censorship conjecture in the following way: considering a collapse picture, at some point, a marginally trapped surface forms. Now, Penrose imposes the Null Convergence Condition (NCC), which requires that the Ricci tensor satisfies
\be
\label{eqn:NCC}
R_{\mu \nu}U^\mu U^\nu\geq0 \,,
\ee
for all null vector fields $U^\mu$ at every point in spacetime. With the use of the Einstein equation
\begin{equation}
\label{eqn:Einstein}
R_{\mu\nu} - \frac{1}{2} R\,{g}_{\mu\nu} = 8\pi T_{\mu\nu} \,,
\end{equation}
this condition translates to an energy condition about matter, the Null Energy Condition (NEC)\footnote{Sometimes in the literature the term is also used instead of the NCC \cite{Kontou:2020bta}.}, where the Ricci tensor is replaced by the energy-momentum tensor $T_{\mu \nu}$. The NCC is a crucial assumption in several foundational results in general relativity, such as Penrose's singularity theorem~\cite{penrose1965gravitational}. The theorem predicts null geodesic incompleteness using the NCC along with the requirement of a trapped surface and some causality assumptions. Thus the spacetime with the trapped surface is necessarily singular. If we assume weak cosmic censorship, an event horizon also forms outside the marginally trapped surface.

The NCC and the existence of an event horizon are the necessary conditions for Hawking's black hole area theorem~\cite{Hawking:1971vc}, which shows that the event horizon of a black hole is always non-decreasing. Finally, the initial Penrose formulation invokes the assumption that the black hole spacetime approaches the Kerr metric. This was an assumption, but theorems of uniqueness of Kerr strongly support it. More precisely, Robinson \cite{Robinson:1975bv} showed that the only stationary solution of the vacuum Einstein equations when a regular event horizon is present is the Kerr metric\footnote{Further works removed some of the assumptions of this theorem.}. Putting the area theorem and this assumption together gives
\be \label{eqn:Penarg_AreaTheorem}
A \leq A_{\text{Kerr}} \,,
\ee
where $A_{\text{Kerr}}$ is the area of the event horizon in a Kerr spacetime. We have 
\be
\label{eqn:Penarg}
A_{\text{Kerr}}=8\pi M_{\text{Kerr}}\left( M_{\text{Kerr}}+\sqrt{M^2_{\text{Kerr}}-L^2/M_{Kerr}^2} \right) \leq 16\pi M^2_{\text{Kerr}}= 16\pi m^2_{\text{B}} \,,
\ee
where $L$ is the total angular momentum and the last equality arises from the fact that the mass in a Kerr spacetime asymptotically approaches the Bondi mass $m_B$ of the evolving spacetime. The Bondi mass \cite{Bondi1962-rc} is similar to the ADM mass but it is computed at null infinity instead of the spatial infinity. It was conjectured by Penrose but proven later under certain assumptions \cite{MarcMars_2007} that $m_{\mathrm{ADM}} \geq m_{\text{B}}$. In particular, since gravitational waves carry positive energy, the Bondi mass cannot increase in the future. So this is true provided the Bondi mass approaches the ADM mass $m_{\mathrm{ADM}}$ of the initial slice, which is only known under additional assumptions. If instead, the original inequality \eqref{eqn:Penroseoriginal} fails then one can argue that weak cosmic censorship fails and then a naked singularity develops.

The second formulation is expressing the Penrose inequality \textit{locally}, meaning in terms of initial data (see Preliminaries \ref{sec:preliminaries}), i.e., maximal spacelike slices of the spacetime. The connection with weak cosmic censorship is similar, as if it holds, then an initial data set containing a future trapped surface must develop a singularity, and thus an event horizon. Here the NCC \eqref{eqn:NCC} is again necessary. The intersection of the event horizon and the initial data set defines a spacelike surface $H_S$ that
separates the black hole from the asymptotic region. This follows from the fact that if NCC is satisfied, the trapped surface $\Sigma$ is behind the event horizon \cite{Hawking:1971vc}. It is not necessarily true that $|\Sigma| \leq |H_S|$ but it is true that $|\Amin(\Sigma)| \leq|H_S|$ where $\Amin(\Sigma)$ is the infimum of the areas of all spacelike surfaces enclosing the trapped surface $\Sigma$. Then if the global Penrose inequality \eqref{eqn:Penroseoriginal} holds
\be
\label{eqn:Penroselocal}
m_{\mathrm{ADM}} \geq \sqrt{\frac{A_{\min}(\Sigma)}{16\pi}} \,.
\ee
The same argument holds for marginally trapped surfaces. Penrose's idea was that if initial data sets were found that violate Eq.~\eqref{eqn:Penroselocal}, then the weak cosmic censorship conjecture is violated - given that the other assumptions are satisfied. However, Eq.~\eqref{eqn:Penroselocal} can also be proven on its own using properties of initial data. Note that a proof of Eq.~\eqref{eqn:Penroselocal} is in support of weak cosmic censorship but does not fully prove it. For the initial data set version, instead of the NCC, a different energy condition, the geometric form of the dominant energy condition (DEC) is used
\be
\label{eqn:gDEC}
G_{\mu \nu} t^\mu \xi^\nu \geq 0 \,,
\ee
where $G_{\mu \nu}$ is the Einstein tensor and $t^\mu$ and $\xi^\nu$ co-oriented timelike vectors. The geometric form of the DEC implies the NCC \cite{Kontou:2020bta} as well as the timelike convergence condition (where $t^\mu$ and $\xi^\mu$ are the same vector field). In this context, the DEC can be expressed as a condition on the initial data set.  

Recently, significant progress has been made towards proving \eqref{eqn:Penroselocal} under certain conditions, such as for  maximal time-symmetric spacelike slices satisfying the geometric form of the weak energy condition, the timelike convergence condition. This is known as the Riemannian Penrose inequality, which also holds in the presence of multiple black holes \cite{HuiskenIlmanen2001,Bray2001,BrayLee2009,AgostinianiMantegazzaMazzieriOronzio2022}. However, the most general case, for maximal spacelike slices, remains open except for the spherically symmetric case \cite{Hayward1996,MalecMurchadha1994,Bray:2009au,BrydenKhuriSormani2020,Lee2019-cz}. For the general case, the inequality fails if the initial data set satisfies only the timelike convergence condition, as demonstrated in \cite{Jaracz2022}. A review of relevant proofs and open questions is given in Ref.~\cite{Mars:2009cj}.

One issue with all these results is that quantum fields are known to violate the usual pointwise energy conditions, including the NEC~\cite{Epstein1965}. For example, an evaporating black hole is known to violate all such energy conditions. From this, it is clear that the proposed inequality, as well as the previously mentioned classical theorems, fail at the semiclassical level when a field-theoretic treatment is introduced~\cite{BoussoMoghaddamTomasevic2019}.

Notably, semiclassical singularity theorems were recently proved \cite{FewsterKontou_2022, Freivogel:2020hiz}, where the pointwise energy conditions were replaced by averaged energy conditions. In these cases, those can be derived, with the use of the semiclassical Einstein equation
\be
\label{eqn:SEE}
R_{\mu\nu} - \frac{1}{2} R \,g_{\mu\nu} = 8 \pi \langle T_{\mu\nu} \rangle_\omega \,.
\ee
The semiclassical Einstein equation assumes that the quantum fields generate a classical curved background, valid as an approximation in a certain regime. A solution to the semiclassical Einstein equation consists of a metric $g$ and a state $\omega$ and it is generally difficult to find. In this approach, it is used to translate results about the expectation value of the stress-energy tensor given by Quantum Energy Inequalities (QEIs) to curved spacetimes. QEIs are derived directly by quantum field theory and bound the averaged negative energy of the expectation value of the stress-energy tensor~\cite{Ford1978}. 

The Hawking black hole area theorem~\cite{Hawking:1971vc} is different from the singularity theorems as it is always violated semiclassically. Nevertheless, it was obtained~\cite{Kontou:2023ntd} under a weaker condition than the NCC \footnote{Note that this condition is averaged but not obeyed by quantum fields in general.}. A weaker version of the theorem where the area of the black hole is allowed to decrease but with a bounded rate was also derived in Ref.~\cite{Kontou:2023ntd} using QEIs. 

Therefore, there is evidence that such theorems can admit some semiclassical extension with an appropriate energy condition. It is worth mentioning that another approach to this problem is to replace the area of surfaces which corresponds to the classical black hole entropy in the problem, with their generalized entropy (see \cite{BoussoMoghaddamTomasevic2019} and references within). 

Given the number of results in general relativity that can be recovered at the semiclassical level when a better-suited energy condition is satisfied, it is natural to wonder whether one could derive the Penrose inequality under weaker energy conditions. That is, even if the NCC is violated pointwise, it might still be possible to derive the same inequality or a modified version of it with an average version of it.

These are the two main results of this paper:
\begin{itemize}
    \item 
    A proof of the local version of the Penrose inequality, assuming spherical symmetry of the initial data set, using an average form of the DEC (Theorem~\ref{PI-SADEC}).  
    \item 
    A proof of a modified version of the Penrose inequality for evaporating black holes using a condition inspired by QEIs and an average form of the DEC (Theorem~\ref{the:PIevapo} and Corollary~\ref{cor:PIevapoqei}).
\end{itemize}

The paper is organized as follows: In Sec.~\ref{sec:preliminaries} we introduce the basics of the initial data formulation. In Sec.~\ref{sec:PIinitialdata} we present our first main result with a proof of the Penrose inequality with an averaged energy condition while in Sec.~\ref{sec:examples} we provide relevant examples of its use. In Sec.~\ref{sec:dynamical} we prove a theorem for the case of dynamical black holes including the evaporating case. In  Sec.~\ref{sec:applications} we apply the previous results using a condition inspired by QEIs and provide relevant examples. We conclude in Sec.~\ref{sec:discussion} with a summary and discussion of future work.

\begin{table}
    \centering
    \begin{tabular}{cc}
       ADM  & Arnowitt-Deser-Misner \\
       NEC  & Null Energy Condition \\
       NCC  & Null Convergence Condition\\
        DEC & Dominant Energy Condition \\
        QEI & Quantum Energy Inequality \\
        MOTS & Marginally Outer Trapped Surface \\
        MITS & Marginally Inner Trapped Surface \\
        $d\nu$-ADEI & $d\nu$ Averaged Dominant Energy Inequality \\
        $S$-DEI & Spherical Averaged Dominant Energy Inequality \\
        $S$-DEC & Spherical Averaged Dominant Energy Condition \\
        WEC & Weak Energy Condition \\
        SNEC & Smeared Null Energy Condition \\
    \end{tabular}
    \caption{Abbreviations used in the text (in order of appearance).}
    \label{tab:my_label}
\end{table}
\vspace{0.2in}

\noindent
\textbf{Conventions:} All manifolds are assumed to be smooth, Hausdorff, second-countable, and connected. Lorentzian manifolds $(\overline{M}^n, \overline{g})$ are denoted using an overline and are $n$-dimensional with $n \geq 4$, while Riemannian manifolds $(M^d, g)$ are $d$-dimensional. Lorentzian metrics have signature $(-, +, \ldots, +)$, and both $\overline{g}$ and $g$ are assumed to be smooth.

\section{Preliminaries}
\label{sec:preliminaries}

A spacetime is a pair $(\overline{M}^n, \overline{g})$, where $\overline{M}^n$ is a $n$-dimensional, $n\geq4$, manifold endowed with a Lorentzian metric $\overline{g}$. It is assumed that $(\overline{M}, \overline{g})$ is time-orientable.  A spacetime $(\overline{M}, \overline{g})$ satisfying the Einstein equation \eqref{eqn:Einstein} is said to obey the geometric form of the \textit{Dominant Energy Condition} (DEC) \eqref{eqn:gDEC} provided the energy-momentum tensor $T_{\mu\nu}$ satisfies $T_{\mu\nu} t^\mu \xi^\nu \geq 0 $ for all future pointing causal vectors $t^\mu, \xi^\nu$. 

Any embedded spacelike hypersurface $M$ in a spacetime gives rise to an \textit{initial data set} $(M^d, g, \mathcal{K})$, where $g$ is the Riemannian induced metric on $M$ and $\mathcal{K}$ the second fundamental form associated with the future directed timelike unit vector field ${t}$. The geometric form of the DEC holding in $\overline{M}$ guarantees that
\be
\label{eqn:datadec}
\mu \geq |J|_g \quad \text{along $M$},
\ee
where the scalar $\mu = {G_{\mu\nu}}{t}^\mu{t}^\nu$ and the one-form $J^\mu = G^{\mu\nu} t_\nu$. In particular, by the Gauss-Codazzi equations, one can express $\mu$ and $J$ solely in terms of the initial data set,
\be
\label{eqn:rhoids}
\mu = \frac{1}{2} \left(R_g - |\mathcal{K}|_g^2+(\operatorname{tr}_g\mathcal{K})^2\right)\,,
\ee
\be
\label{eqn:tids}
J =\operatorname{div}_g\mathcal{K} - d(tr_g\mathcal{K}) \,.
\ee
For any Riemannian manifold $(M, g)$ and smooth symmetric $(0, 2)$ tensor field $\mathcal{K}$, we can define $\mu$ and $J$ as the right-hand sides of the above equations. If the triple $(M, g, \mathcal{K})$ is an initial data set in a  spacetime satisfying the Einstein equations \eqref{eqn:Einstein}, $\mu/8 \pi$ and $J/8 \pi$ represent the physical matter fields and are called the \textit{energy density} and \textit{current density}, respectively.

The case $\mathcal{K}\equiv 0$ is referred to as time-symmetric (or Riemannian) and the DEC \eqref{eqn:datadec} reduces to $R_g \geq 0$. Until Sec.~\ref{sec:dynamical}, the discussion will focus exclusively on initial data sets, as the DEC can be formulated independently of an underlying spacetime.

An initial data set $(M^d, g, \mathcal{K})$ is said to be \textit{asymptotically flat} if there exists a compact set $K\subset M$ such that $M \setminus K$ is diffeomorphic to $\mathbb{R}^d \setminus \overline{B_{1}(0)}$, i.e., there exists a diffeomorphism 
    \be
    \Phi:  M \setminus K \rightarrow \mathbb{R}^d \setminus \overline{B_1(0)} \,,
    \ee
    where $\overline{B_1(0)}$ is the standard closed unit ball\footnote{For simplicity, we assume that the manifold has only one end $M \setminus K$.}. Moreover, if we think $\Phi$ as a coordinate chart with coordinates $x^1,\ldots, x^d$, then, in this coordinate chart, we assume that $g_{ij}$ satisfies\footnote{The notation $O_k(|x|^{q})$, for $k\geq1$, denotes a function in $C^k_{-q}$, where $f \in C^k_{-q}$ if $ |f(x)| + |x| |\partial f(x)| + \ldots + |x|^k |\partial^k f(x)| < C |x|^{-q}$ for some constant $C$.}
    \begin{align}
        g_{ij}(x) & = \delta_{ij} + O_2(|x|^{-q}) \label{eq:DecayMetric}\\
        \mathcal{K}_{ij}(x) & = O_1(|x|^{-q-1}) \,,\label{eq:Decay2form}
    \end{align}
for some $q>(d-2)/2$ and $\mu$ and $J$ are integrable over $M$. 

With these asymptotics, the \textit{ADM energy-momentum} $(E, P)$ is well-defined \cite{Bartnik1986, Chrusciel1986} and is given by 
\be
E:= \lim_{r\rightarrow \infty} \frac{1}{2(d-1)\omega_{d-1}} \int_{\mathbb{S}_r} (g_{ij,i} - g_{ii,j}) {n}^i d\mu \,,
\ee
and
\be
P_i:= \lim_{r\rightarrow \infty} \frac{1}{(d-1)\omega_{d-1}} \int_{\mathbb{S}_r} (\mathcal{K}_{ij} -  (\operatorname{tr}_g \mathcal{K})g_{ij}) {n}^j d\mu \,.
\ee
Here, $\mathbb{S}_r$ is the coordinate sphere of radius $r$, $n$ is its outward unit normal, $d\mu$ is the Euclidean area element of $\mathbb{S}_r$ and $\omega_{d-1}$ is the area of the standard unit $(d-1)$ sphere. The ADM mass of an asymptotically flat initial data set is defined to be
\be
m_{\mathrm{ADM}} = \sqrt{E^2 - \delta_{ij}P_iP_j} \,.
\ee

Let $\Sigma^{d-1} \subset M$ be a closed and two-sided hypersurface with outward unit normal vector $\nu$. An important quantity associated with $\Sigma$ are the {\textit outward/inward null expansions} defined as
\be\label{eq:MOTS_MITS}
\theta_\pm = \operatorname{tr}_\Sigma \mathcal{K} \pm H_\Sigma,
\ee
where $H_\Sigma=\operatorname{div}_\Sigma \nu$ is the mean curvature of $\Sigma$ associated with the outward normal vector field.

In a spacetime, the sign of the null expansions indicates the convergence and divergence of past and future directed null geodesics, as well as the gravitational field's strength near $\Sigma$. This can be clarified by the following argument. Let $t$ be the future directed timelike unit vector field orthogonal to $M$. Then $\ell_\pm= t \pm \nu$ are future directed null normal vector fields along $\Sigma$. Define the \textit{null second fundamental forms} $\chi_\pm (X,Y) = g(\nabla_X \ell_\pm, Y)$, for all $X,Y \in \mathfrak{X}(\Sigma)$, then the null expansions $\theta_\pm$ of $\Sigma$ are given by
$
\theta_\pm = \tr_\Sigma\chi_\pm = \tr_\Sigma \mathcal{K} \pm H_\Sigma,
$
and, in particular, $\theta_\pm = \operatorname{div}_\Sigma \ell_\pm$.

Outer and inner trapped surfaces are characterized by the inequalities $\theta_+<0$ and $\theta_-<0$, respectively, indicating that they may lie within a region of strong gravitational field. When $\theta_+=0$ ($\theta_-=0$), the surface $\Sigma$ is called a marginally outer (inner) trapped surface, or MOTS (MITS). Additionally, we refer to that $\Sigma$ as an outermost MOTS (MITS) if it is not enclosed by any other MOTS (MITS). In general, for stationary spacetimes, cross sections of the event horizon are MOTS and in dynamical spacetimes MOTS occurs inside the black hole region~\cite[Theorem 6.1]{Chrusciel2009-nt}.

\section{Local Penrose Inequality from average energy conditions}
\label{sec:PIinitialdata}

In this section, we introduce the notion of the averaged dominant energy inequality, which depends on a specific choice of measure. For spherically symmetric initial data sets, there is a natural choice of measure that allows us to establish a modified Penrose inequality and recover the classical Penrose inequality, even in the presence of pointwise violations of the DEC.

\subsection{Averaged dominant energy inequality}

Motivated by pointwise violations of the DEC, we propose an averaged condition that provides sufficient flexibility to be applied to a class of physically interesting examples. We note that this condition is not expected to be obeyed by quantum fields. The reason is that our condition is integrated over a spacelike hypersurface, and such conditions do not generally have a finite lower bound in quantum field theory~\cite{FordHelferRoman2002, Fewster:2002ne}. We should note however, that if the semiclassical Einstein equation is satisfied and the field generates an asymptotically flat hypersurface it seems that a lower bound should exist. As evidence, the condition is satisfied by various black hole geometries that violate the pointwise DEC, some motivated by quantum field theory corrections as we show in Sec.~\ref{sec:examples}. 

\begin{definition}[Averaged Dominant Energy Inequality]
    An initial data set $(M^d, g, \mathcal{K})$ satisfies the $d\nu$-\textit{Averaged Dominant Energy Inequality} ($d\nu$-ADEI) if there exists $K \in \mathbb{R}$ such that
    \be
    \label{eqn:dnadei}
    \int_{M} (\mu - |J|_g)\, d\nu \geq K \,,
    \ee
where $\mu$ and $J$ are defined in Eqs.~\eqref{eqn:rhoids} and \eqref{eqn:tids}, respectively, and $d\nu$ is a measure.
\end{definition}

A natural choice for $d\nu$ would be the induced measure $d\mu_g$, however this choice is not well-suited to address the class of the problems we aim to consider. Depending on the symmetries of the initial data or the structure of the proof techniques, more appropriate choices of measure $d\nu$ may rise. Let $(M^d, g, \mathcal{K})$ be a spherically symmetric initial data set with boundary, i.e., there is a diffeomorphism from $M^d$ to $[0, \infty) \times \mathbb{S}^{d-1}$ under which the metric takes the form
\be\label{eq:SphrMetricds}
g = ds^2 + r(s)^2d\Omega^2 \,,
\ee
for some smooth positive function $r(s)$, where $d\Omega^2$ is the standard metric on the sphere, while $\mathcal{K}$ can be written as
\be
\mathcal{K} = \mathcal{K}_{nn}(s) ds^2 + \frac{1}{d-1}\kappa(s) r(s)^2 d\Omega^2 \,,
\ee
where $\mathcal{K}_{nn}(s)$ and $\kappa(s)$ are smooth functions. Observe that $\kappa$ is the trace of $\mathcal{K}$ over the sphere at $s$ and $\mathcal{K}_{nn}=\mathcal{K}(n,n)$, where $n$ is the unit outward normal of the sphere at s.

In particular, the symmetric spheres are minimal at points where $dr/ds=0$. When this quantity is nonzero, the metric can be rewritten as
\be\label{eq:SphrSymMetric}
g = \frac{1}{V(r)} dr^2 + r^2d\Omega^2 \,,
\ee
where $V(r)$ is a smooth positive function, and  $d\Omega^2$ denotes the standard metric on the unit sphere $\mathbb{S}^{d-1}$. Given any $\mathbb{S}(r_0)$ for some $r_0>0$, let $B_{r_0}$ denote the region it bounds. Denote by $\omega_{d-1}$ the volume of the unit sphere $\mathbb{S}^{d-1}$, and choose the measure $d\nu = \omega_{d-1}^{-1}\sqrt{V(r)} d\mu_g$. A standard computation reveals that
    \be \label{eq:S-ADEI}
     \int_{M\setminus B_{r_0}} \left(\mu - |J|_g\right)\,d\nu = \frac{1}{\omega_{d-1}}\int_{r_0}^{\infty} \int_{\mathbb{S}(r)} (\mu  - |J|_g) \, d{S}dr.
    \ee
If the spherically symmetric initial data set has $d\nu$-ADEI bounded below by $K \in \mathbb{R}$ where 
\be
d\nu=\frac{1}{\omega_{d-1}}\sqrt{V(r)} d\mu_g,
\ee
with respect to $\mathbb{S}(r_0)$, then $(M^d, g, \mathcal{K})$ satisfies the \textit{Spherical Dominant Energy Inequality} (S-DEI) bounded below by $K$ with respect to $\mathbb{S}(r_0)$. The S-ADEI can also be read as
    \be
    \int_{r_0}^{\infty} r^{d-1} \left( \mu  - |J|_g \right)\,dr \geq K.
    \ee
In particular, when $K=0$, we refer to the S-DEI as the \textit{Spherical Averaged Dominant Energy Condition} (S-DEC) with respect to $\mathbb{S}(r_0)$.

\subsection{Modified Penrose inequality}

In this subsection, we apply the S-ADEI to show a modified Penrose inequality for complete spherically symmetric initial data sets and state some immediate consequences.

To formulate the S-ADEI for spherically symmetric initial data sets with boundary $\partial M$ given by an outermost MOTS, we must ensure the existence of a region free of minimal surfaces, allowing the metric to be written as \eqref{eq:SphrSymMetric}. The existence of an outermost MOTS $\partial M$ is not sufficient. However, the presence (or absence) of a MITS, and then an outermost MITS, guarantees that no minimal spheres exist outside the outermost MITS (or $\partial M$)\footnote{This fact is proven alongside the main theorem.}. Consequently, the S-ADEI is well-defined with respect to the outermost MITS (or $\partial M$).  

With this in mind, we can now state our main theorem.

\begin{theorem}
\label{PI-SADEC}
Let $(M^d, g, \mathcal{K})$, $d<8$, be a complete spherically symmetric, asymptotically flat initial data set such that $\partial M$ is an outermost MOTS. If the S-ADEI is bounded below by $K \in \mathbb{R}$ with respect to the outermost MITS (or $\partial M$ in its absence), i.e.,
\be
\label{eqn:SADEI}
\int_{r_0}^{\infty} r^{d-1} \left( \mu  - |J|_g \right)\,dr \geq K,
\ee
then
\be
\label{eqn:peninitial}
m_{\mathrm{ADM}} - \frac{K}{(d-1)} \geq \frac{1}{2}\left(\frac{\Amin(\partial M)}{\omega_{d-1}}\right)^{\frac{d-2}{d-1}}.
\ee
In particular, if there is no outermost MITS outside $\partial M$, then
\be
\label{eqn:peninitial2}
m_{\mathrm{ADM}} - \frac{K}{(d-1)} \geq \frac{1}{2}\left(\frac{|\partial M|}{\omega_{d-1}}\right)^{\frac{d-2}{d-1}}.
\ee
\end{theorem}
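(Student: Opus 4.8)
The plan is to reduce the inequality to a monotonicity statement for the spherically symmetric Misner--Sharp/Hawking mass and to integrate it against the S-ADEI. Throughout I work in the region exterior to the anchor sphere $\mathbb{S}(r_0)$ --- the outermost MITS, or $\partial M$ in its absence --- where I will argue that the areal radius $r$ is a genuine coordinate, so that the metric takes the form \eqref{eq:SphrSymMetric} with $V(r)=|\nabla r|_g^2>0$, and where $H_{\mathbb{S}(r)}=(d-1)\sqrt{V}/r$ and hence $\theta_\pm=\kappa\pm(d-1)\sqrt{V}/r$.

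First I would introduce the quasi-local mass
$$m(r)=\frac{r^{d-2}}{2}\left(1+\frac{r^2}{(d-1)^2}\theta_+\theta_-\right)=\frac{r^{d-2}}{2}(1-V)+\frac{r^d\kappa^2}{2(d-1)^2}.$$
Two endpoint facts are immediate and motivate this choice. At a MOTS or a MITS one has $\theta_+\theta_-=0$, so $m(r_0)=\tfrac12 r_0^{d-2}=\tfrac12\left(|\mathbb{S}(r_0)|/\omega_{d-1}\right)^{(d-2)/(d-1)}$, reproducing exactly the area term on the right-hand side of \eqref{eqn:peninitial}. As $r\to\infty$, the decay \eqref{eq:DecayMetric}--\eqref{eq:Decay2form} with $q>(d-2)/2$ forces $r^d\kappa^2\to0$ while $\tfrac12 r^{d-2}(1-V)\to m_{\mathrm{ADM}}$ (the ADM momentum vanishes by spherical symmetry, so $m_{\mathrm{ADM}}=E$); hence $m(\infty)=m_{\mathrm{ADM}}$.

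The analytic heart is the monotonicity computation. I would differentiate the purely metric piece using the standard identity $\tfrac{d}{dr}\!\left[\tfrac12 r^{d-2}(1-V)\right]=\tfrac{r^{d-1}}{2(d-1)}R_g$, substitute $R_g=2\mu+|\mathcal{K}|_g^2-(\operatorname{tr}_g\mathcal{K})^2$ from \eqref{eqn:rhoids}, and eliminate the radial component $\mathcal{K}_{nn}$ in favour of the radial momentum $J$ via the momentum constraint \eqref{eqn:tids}. The remaining $\kappa$-derivative terms assemble into the exact derivative $\tfrac{d}{dr}\!\left[r^d\kappa^2/2(d-1)^2\right]$, which is precisely the correction distinguishing $m$ from $\tfrac12 r^{d-2}(1-V)$, leaving
$$\frac{dm}{dr}=\frac{r^{d-1}}{d-1}\mu-\frac{r^d\kappa}{(d-1)^2}\frac{J}{\sqrt{V}}.$$
The bound $|\kappa|\le(d-1)\sqrt{V}/r$ --- which is exactly $\theta_+\theta_-\le0$ --- gives $\tfrac{r^d|\kappa|}{(d-1)^2\sqrt{V}}\le\tfrac{r^{d-1}}{d-1}$, so that $\tfrac{dm}{dr}\ge\tfrac{r^{d-1}}{d-1}(\mu-|J|_g)$. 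Integrating from $r_0$ to $\infty$ and applying the S-ADEI \eqref{eqn:SADEI} bounds the right-hand side below by $K/(d-1)$, yielding $m_{\mathrm{ADM}}-m(r_0)\ge K/(d-1)$. Evaluating $m(r_0)$ at the anchor gives \eqref{eqn:peninitial2} when no MITS is present; when a MITS is present it encloses $\partial M$, so $|\mathbb{S}(r_0)|\ge\Amin(\partial M)$, which produces \eqref{eqn:peninitial}.

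I expect the main obstacle to be the geometric/global step: justifying that the region exterior to the outermost MITS (or $\partial M$) is free of minimal spheres, so that $r$ is a valid coordinate and $\theta_+\theta_-\le0$ holds on the entire range of integration. The idea is that outside the outermost MOTS one has $\theta_+\ge0$ and outside the outermost MITS one has $\theta_-\le0$, so a minimal sphere ($H=0$, i.e.\ $\theta_+=\theta_-=\kappa$) is forbidden except at a simultaneous MOTS/MITS; this secures both the coordinate and the sign of $\theta_+\theta_-$. Making the ordering of the outermost MOTS and MITS precise and ruling out interior necks is where the restriction $d<8$ enters, through the regularity and existence theory for the relevant marginally trapped and minimal hypersurfaces, and it is this structural statement --- flagged in the text as proved alongside the theorem --- that I would develop most carefully; the behaviour of $J/\sqrt{V}$ where $V\to0$ at a time-symmetric anchor would also need a short separate check as an improper integral.
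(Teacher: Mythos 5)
Your proposal is correct and follows essentially the same route as the paper's own proof: the same local (Misner--Sharp/Hawking-type) mass $m(r)$, the same monotonicity identity $m'(r)=\frac{r^{d-1}}{d-1}\bigl(\mu-\frac{\kappa}{H}J(\nu)\bigr)$ estimated below via $H\geq|\kappa|$, integration against the S-ADEI, the same endpoint evaluations $m(r_0)=\tfrac12 r_0^{d-2}$ and $m(\infty)=m_{\mathrm{ADM}}$, and the same two-case anchoring at the outermost MITS (using $\Amin(\partial M)\leq|S|$) or at $\partial M$, with minimal spheres excluded by the same continuity/outermost-surface argument in which the restriction $d<8$ enters. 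No substantive differences to report.
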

 
An immediate consequence of this result is that if the initial data set satisfies the S-DEC ($K \equiv 0$), then the classical Penrose inequality for spherically symmetric initial data sets is recovered. The key advantage of this approach is that it permits pointwise violations of the DEC. An interesting observation is that if a MITS, and thus an outermost MITS, exists outside $\partial M$, any pointwise violation of the DEC occurring between the outermost MITS and $\partial M$ is irrelevant to the proof, provided the S-ADEI holds with respect to the outermost MITS. This phenomenon can be interpreted as a shielding effect.

Whenever $|J|_g = 0$, which holds for time-symmetric initial data sets, we can also conclude that the Penrose inequality is reversed when the S-DEC is violated.
\begin{corollary}
\label{cor:reversed}
Let $(M^d, g, \mathcal{K})$, $d<8$, be a complete spherically symmetric, asymptotically flat initial data set such that $\partial M$ is an outermost MOTS. Assume that $|J|_g = 0$. The S-DEC with respect to the outermost MITS (or $\partial M$ in its absence) is violated, i.e.,
\be
\int_{r_0}^{\infty} r^{d-1} \mu  \,dr < 0,
\ee
if and only if the Penrose inequality is reversed,
\be
\label{eqn:reverse}
m_{\mathrm{ADM}} < \frac{1}{2}\left(\frac{{\Amin(\partial M)}}{\omega_{d-1}}\right)^{\frac{d-2}{d-1}}.
\ee
\end{corollary}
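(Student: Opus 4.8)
The plan is to obtain Corollary~\ref{cor:reversed} as a statement about the \emph{sharpness} of Theorem~\ref{PI-SADEC}: when $|J|_g=0$ the inequality \eqref{eqn:peninitial} is saturated and becomes an exact identity, after which the claimed equivalence is one line of algebra. I would revisit the proof of Theorem~\ref{PI-SADEC} and track where the DEC-type combination $\mu-|J|_g$ is used to pass from an exact computation of the mass to an estimate. In the spherically symmetric setting the current $J$ is purely radial and enters the flux/monotonicity argument only through the nonnegative quantity $|J|_g$, so its vanishing removes all slack and upgrades \eqref{eqn:peninitial} to the equality
\[
m_{\mathrm{ADM}} - \frac{1}{2}\left(\frac{\Amin(\partial M)}{\omega_{d-1}}\right)^{\frac{d-2}{d-1}} = \frac{1}{d-1}\int_{r_0}^{\infty} r^{d-1}\mu\,dr .
\]
Equivalently, when $J=0$ the quasi-local (Misner--Sharp/Hawking-type) mass whose boundary values reproduce $m_{\mathrm{ADM}}$ and the area term has radial derivative equal to $\tfrac{1}{d-1}r^{d-1}\mu$ (as one checks directly from \eqref{eqn:rhoids} and the form \eqref{eq:SphrSymMetric}), so that radial integration is an identity rather than an inequality.

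The substance of the argument is to confirm that every other ingredient entering this identity is exact rather than an estimate. First, spherical symmetry forces the ADM momentum to vanish, so $m_{\mathrm{ADM}}=E$ and no loss occurs in replacing the mass by the purely metric asymptotic quantity. Second, away from minimal spheres the metric takes the form \eqref{eq:SphrSymMetric}, and integrating $\tfrac{1}{d-1}r^{d-1}\mu$ from $r_0$ to infinity produces two boundary terms: the endpoint at infinity yields $m_{\mathrm{ADM}}$, using the decay \eqref{eq:Decay2form} to kill the $\mathcal{K}$-contribution to the quasi-local mass there, while the inner endpoint yields $\tfrac{1}{2}(\Amin(\partial M)/\omega_{d-1})^{(d-2)/(d-1)}$, evaluated at the outermost minimal enclosing sphere where $V(r_0)=0$. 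Here I would invoke the ``no minimal spheres outside the outermost MITS'' fact established alongside Theorem~\ref{PI-SADEC}, which guarantees that $r_0$ is well defined and that the region of integration genuinely admits the form \eqref{eq:SphrSymMetric}.

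With the identity in hand the corollary is immediate: since $d-1>0$, rearranging gives
\[
\int_{r_0}^{\infty} r^{d-1}\mu\,dr < 0 \quad\Longleftrightarrow\quad m_{\mathrm{ADM}} < \frac{1}{2}\left(\frac{\Amin(\partial M)}{\omega_{d-1}}\right)^{\frac{d-2}{d-1}},
\]
which is exactly \eqref{eqn:reverse}, and the equivalence runs in both directions at once; note also that $|J|_g=0$ is precisely what lets the S-DEC violation be written as $\int r^{d-1}\mu<0$ rather than $\int r^{d-1}(\mu-|J|_g)<0$. The main obstacle I anticipate is not this final algebra but the bookkeeping in the second paragraph: verifying that, with $\mathcal{K}\neq 0$ but $|J|_g=0$, the full $\mathcal{K}$-dependent part of $\mu$ in \eqref{eqn:rhoids} is correctly packaged into the total radial derivative of the quasi-local mass with no residual current term left over. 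This is exactly the point at which the constraint $J=\operatorname{div}_g\mathcal{K}-d(\operatorname{tr}_g\mathcal{K})=0$ from \eqref{eqn:tids} must be used, and it must be checked together with the subtlety that evaluating the quasi-local mass at the outermost minimal sphere reproduces $\Amin(\partial M)$ rather than merely $|\partial M|$.
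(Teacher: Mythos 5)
Your overall strategy is exactly the paper's: the proof of Theorem~\ref{PI-SADEC} consists of an exact computation (the monotonicity identity for the local mass $m(r)$) followed by two inequalities in \eqref{eq:ApplyS-ADEI}, the current $J$ enters only through the step $\mu-\tfrac{\kappa}{H}J(\nu)\geq \mu-|J|_g$, and when $|J|_g=0$ that step collapses to an equality, yielding the identity
\begin{equation*}
m_{\mathrm{ADM}}-m(r_0)=\frac{1}{d-1}\int_{r_0}^{\infty}r^{d-1}\mu\,dr ,
\end{equation*}
from which the equivalence is one line of algebra. The paper's proof is precisely this observation, so you have reproduced its route.

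However, your bookkeeping of the inner boundary term contains a genuine error. The integration starts at the outermost MITS $S$ (or at $\partial M$), which is a \emph{marginally trapped} surface, not a minimal sphere, and $V(r_0)\neq 0$ there in general: since $H=(d-1)\sqrt{V}/r$, one has $V(r_0)=0$ if and only if $H=0$, whereas at a MOTS/MITS one only has $\kappa^2=H^2$ with both possibly nonzero (and $|J|_g=0$ does not force $\mathcal{K}\equiv 0$, so this case is not vacuous). The correct evaluation, used in the paper, is
\begin{equation*}
m(r_0)=\tfrac{1}{2}r_0^{d-2}\left(1-V(r_0)\right)+\tfrac{1}{2(d-1)^2}\kappa^2 r_0^{d}=\tfrac{1}{2}r_0^{d-2},
\end{equation*}
where the $\kappa^2$ term exactly compensates the $-V(r_0)$ term via $\kappa^2=H^2=(d-1)^2V(r_0)/r_0^2$; your justification via ``$V(r_0)=0$ at the outermost minimal enclosing sphere'' is valid only in the time-symmetric case, where MOTS are minimal. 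Relatedly, you conflate this boundary surface with the surface realizing $\Amin(\partial M)$: when an outermost MITS $S$ exists outside $\partial M$, the exact identity involves $|S|$, and one only has $\Amin(\partial M)\leq |S|$, so the direction ``S-DEC violated $\Rightarrow$ reversed inequality with $\Amin$'' does not follow from the identity at face value; it is clean only in the case without a MITS, where $\Amin(\partial M)=|\partial M|=\omega_{d-1}r_0^{d-1}$. (The paper's own one-sentence proof glosses over this same point, but a self-contained argument should either restrict to that case or supply the missing step.)
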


For time-symmetric initial data sets, the S-DEC reduces to a condition on the scalar curvature of the initial data set, given as an immediate corollary the following spherically symmetric Riemannian Penrose modified inequality.

\begin{corollary}
Let $(M^d, g)$, $d<8$, be a complete spherically symmetric, asymptotically flat manifold. If $\partial M$ is the outermost minimal surface and $(M, g)$ has S-ADEI bounded below by $K$ with respect to $\partial M$,  i.e.,
\be
\frac{1}{2}\int_{r_0}^{\infty} r^{d-1} R_g \,dr \geq K
\ee
then
\be
m_{\mathrm{ADM}} - \frac{K}{(d-1)} \geq \frac{1}{2}\left(\frac{|\partial M|}{\omega_{d-1}}\right)^{\frac{d-2}{d-1}}.
\ee  
\end{corollary}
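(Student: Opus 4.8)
The plan is to obtain this statement as a direct specialization of Theorem~\ref{PI-SADEC} to the time-symmetric setting $\mathcal{K}\equiv 0$, so that no new analytic work is needed beyond checking that the hypotheses and conclusions line up. First I would substitute $\mathcal{K}\equiv 0$ into the Gauss--Codazzi expressions \eqref{eqn:rhoids} and \eqref{eqn:tids}. The current density vanishes, $J=\operatorname{div}_g\mathcal{K}-d(\operatorname{tr}_g\mathcal{K})=0$, so that $|J|_g=0$, while the energy density reduces to $\mu=\tfrac12 R_g$. Hence the integrand of the S-ADEI becomes $\mu-|J|_g=\tfrac12 R_g$, and the hypothesis \eqref{eqn:SADEI} of the theorem is exactly the stated bound $\tfrac12\int_{r_0}^{\infty} r^{d-1}R_g\,dr\geq K$.

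The second step is to identify the trapped-surface data. With $\mathcal{K}\equiv 0$ the null expansions \eqref{eq:MOTS_MITS} collapse to $\theta_\pm=\pm H_\Sigma$, so a closed two-sided hypersurface is a MOTS if and only if it is a MITS if and only if it is minimal. In particular the outermost minimal surface $\partial M$ is simultaneously the outermost MOTS and the outermost MITS, and---because nothing lies strictly outside the \emph{outermost} minimal surface---there is no MITS exterior to $\partial M$; equivalently, the region $M\setminus B_{r_0}$ with $r_0=r(\partial M)$ is free of minimal spheres, so that $V(r)>0$ there and the metric form \eqref{eq:SphrSymMetric} is available. We therefore land in the ``absence of an outermost MITS'' branch of Theorem~\ref{PI-SADEC}, in which the S-ADEI is taken with respect to $\partial M$ itself and the sharp conclusion is \eqref{eqn:peninitial2}, carrying $|\partial M|$ rather than $\Amin(\partial M)$ on the right-hand side.

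Combining these two observations, \eqref{eqn:peninitial2} reads $m_{\mathrm{ADM}}-K/(d-1)\geq \tfrac12\left(|\partial M|/\omega_{d-1}\right)^{(d-2)/(d-1)}$, which is precisely the claim, with the dimensional restriction $d<8$, completeness, and asymptotic flatness inherited verbatim from the theorem. The only point demanding genuine care---and thus the place I expect the argument to require a line of justification rather than a bare citation---is the coincidence of the MOTS, MITS, and minimal-surface notions together with the verification that ``outermost minimal'' forces the ``absence'' branch; once that observation is in place the result is an immediate corollary in the literal sense.
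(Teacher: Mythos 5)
Your proposal is correct and follows exactly the paper's route: the paper treats this corollary as an immediate specialization of Theorem~\ref{PI-SADEC} to the time-symmetric case $\mathcal{K}\equiv 0$, where $J=0$, $\mu=\tfrac12 R_g$, and MOTS, MITS, and minimal surfaces coincide so that the outermost minimal surface $\partial M$ forces the no-exterior-MITS branch with conclusion \eqref{eqn:peninitial2}. Your extra care in spelling out the identification $\theta_\pm=\pm H_\Sigma$ is precisely the observation the paper leaves implicit.
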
   

It is well known that the Penrose inequality implies the positive energy theorem (the $m_{\mathrm{ADM}}$ is always positive following the definition). Instead of assuming the DEC, if we impose an appropriate negative lower bound for the S-ADEI, the ADM energy remains positive, as shown in the following corollary.

\begin{corollary}\label{cor:lowerboundK}
Let $(M^d, g, \mathcal{K})$, $d<8$, be a complete spherically symmetric, asymptotically flat initial data set such that $\partial M$ is an outermost MOTS. If the S-ADEI with respect to the outermost MITS (or $\partial M$ in its absence) is bounded below as
\be
K \geq -\frac{(d-1)}{2}\left(\frac{{\Amin(\partial M)}}{\omega_{d-1}}\right)^{\frac{d-2}{d-1}},
\ee
then $E_{\mathrm{ADM}}\geq 0$.
\end{corollary}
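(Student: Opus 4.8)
The plan is to obtain the corollary as a direct consequence of Theorem~\ref{PI-SADEC}, the only real work being to track the sign of the ADM energy. The hypotheses imposed here on $(M^d, g, \mathcal{K})$ are exactly those of the theorem: complete, spherically symmetric, asymptotically flat, $d<8$, with $\partial M$ an outermost MOTS and the S-ADEI bounded below by $K$ with respect to the outermost MITS (or $\partial M$ in its absence). Hence the theorem applies verbatim and furnishes the bound
\be
m_{\mathrm{ADM}} \geq \frac{K}{d-1} + \frac{1}{2}\left(\frac{\Amin(\partial M)}{\omega_{d-1}}\right)^{\frac{d-2}{d-1}}.
\ee

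First I would dispose of the distinction between $m_{\mathrm{ADM}}$ and $E_{\mathrm{ADM}}$. Spherical symmetry forces the ADM momentum to vanish: $P_i$ transforms as a covector under the $\mathrm{SO}(d)$ isometry group of the initial data, and the only rotation-invariant covector is zero, so $m_{\mathrm{ADM}} = \sqrt{E_{\mathrm{ADM}}^2} = |E_{\mathrm{ADM}}|$. The essential observation is that the proof of Theorem~\ref{PI-SADEC} does not merely bound $|E_{\mathrm{ADM}}|$ but controls the \emph{signed} energy, since $E_{\mathrm{ADM}}$ is recovered as the boundary limit at infinity of the monotone quasi-local mass used in the argument (the signed mass aspect of the metric written in the form \eqref{eq:SphrSymMetric}). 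Reading off this signed version of the theorem gives
\be
E_{\mathrm{ADM}} - \frac{K}{d-1} \geq \frac{1}{2}\left(\frac{\Amin(\partial M)}{\omega_{d-1}}\right)^{\frac{d-2}{d-1}}.
\ee

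The remaining step is purely algebraic. Dividing the assumed lower bound on $K$ by $(d-1)>0$ yields $K/(d-1) \geq -\tfrac{1}{2}\left(\Amin(\partial M)/\omega_{d-1}\right)^{(d-2)/(d-1)}$, which is precisely the statement that the right-hand side of the displayed inequality is non-negative after moving the $K$ term across. Combining the two gives $E_{\mathrm{ADM}} \geq 0$, as claimed. I would also note in passing that $\Amin(\partial M)>0$, since $\partial M$ is a nontrivial outermost MOTS, so the right-hand side is well defined.

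The one place where I would be careful — and the only genuine obstacle — is the passage from the theorem's statement in terms of $m_{\mathrm{ADM}}$ to a statement about the signed $E_{\mathrm{ADM}}$. If one used only $m_{\mathrm{ADM}} = |E_{\mathrm{ADM}}| \geq 0$, the conclusion would be vacuous, because a square root is non-negative by definition; the content of the corollary is exactly that $E_{\mathrm{ADM}}$ itself cannot be negative. I must therefore confirm that the quasi-local quantity whose limit defines the ADM energy in the proof of Theorem~\ref{PI-SADEC} retains its sign at infinity. A negative-mass Schwarzschild-type end, for which $E_{\mathrm{ADM}}<0$, is precisely the configuration that a sufficiently negative $K$ would permit and that the assumed lower bound on $K$ is designed to exclude; verifying that the chosen bound rules it out is the crux of the statement.
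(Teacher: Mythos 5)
Your proposal is correct and takes essentially the same route as the paper, which presents this corollary as a particular case of Theorem~\ref{PI-SADEC}: the theorem's proof establishes the signed inequality $E_{\mathrm{ADM}} - \frac{K}{d-1} \geq \frac{1}{2}\left(\Amin(\partial M)/\omega_{d-1}\right)^{\frac{d-2}{d-1}}$ via the monotone local mass $m(r)$ with $m(\infty) = E_{\mathrm{ADM}}$, after which the assumed lower bound on $K$ gives $E_{\mathrm{ADM}} \geq 0$ by the algebra you describe. Your observation that the theorem's stated form in terms of $m_{\mathrm{ADM}} = |E_{\mathrm{ADM}}|$ would be insufficient, and that one must extract the signed version from inside the proof, is precisely the one nontrivial point, which the paper leaves implicit.
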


Analogous to the Riemannian Penrose modified inequality, if the initial data set is time-symmetric, the S-DEC becomes an integral condition on the scalar curvature, and we recover a spherically symmetric version of the spherically symmetric Riemannian positive mass theorem that allows pointwise negative scalar curvature.

In what follows, we prove \Cref{PI-SADEC}, the remaining results of this section are particular cases of this result.

\begin{proof}[Proof of \Cref{PI-SADEC}]
This proof follows the approach in \cite[Theorem 7.46]{Lee2019-cz} and \cite{MarcMars_2007}, except for the argument involving the monotonicity of the local mass. The original proof is attributed to Malec and \'O Murchadha \cite{MalecMurchadha1994} as well as Hayward \cite{Hayward1996}. Since all other arguments remain the same or similar, we only present the main ingredients and show where we apply the S-ADEI to conclude the result. 

We need to separate in two cases (i) there is an outermost MITS outside $\partial M$ or (ii) not. From now on, assume case (ii), i.e., there is no outermost MITS outside $\partial M$.

Since the initial data set is spherically symmetric, the metric can be written in the form of \eqref{eq:SphrMetricds}. The obstruction to writing the metric in form of \eqref{eq:SphrSymMetric} is given by the existence of minimal spheres, i.e., $dr/ds=0$. Suppose that there is a minimal sphere outside $\partial M$ and denote by $\kappa$ the trace of $\mathcal{K}$ over the symmetric sphere of radius $r$. The decay rates for $g$ and $\mathcal{K}$ in \eqref{eq:DecayMetric} and \eqref{eq:Decay2form} imply $H=O(r^{-1})$ and $\mathcal{K}_{ij}=O(r^{-q-1})$, for $q>(d-2)/2$, so $H>|\kappa|$ for large $r$. Since $H=0$ at the minimal sphere, it follows by continuity that $H=|\kappa|$ on some sphere enclosing $\partial M$. Then, this sphere would be a MOTS or a MITS (see \eqref{eq:MOTS_MITS}). Because $\partial M$ is the outermost MOTS, it must be a MITS, which implies the existence of an outermost MITS (only for $d<8$) enclosing it~\cite{Andersson:2007gy,Eichmair2009, Andersson2011}. However, by hypothesis, there is no outermost MITS outside $\partial M$. Therefore, this argument rules out the existence of a minimal sphere outside $\partial M$.

Then, we can write the metric as
\be \label{eq:metric_sphsym}
g = V^{-1} dr^2 + r^2d\Omega^2 \,,
\ee
on $[r_0, \infty) \times \mathbb{S}^{d-1}$, where $\sqrt{V}=\frac{dr}{ds}$ is a smooth function of $r$, and $r_0$ satisfies $|\partial M| = \omega_{d-1} r_0^{d-1}$, where $r_0$ is the radius of $\partial M$. 

We consider the local mass $m(r)$ given by
\begin{align}
m(r)&= \frac{1}{2} r^{d-2}(1-V) + \frac{1}{2(d-1)^2} \kappa^2 r^d.
\end{align}
The first derivative of $m(r)$ is given by
\be
m^\prime(r) = \frac{1}{d-1} r^{d-1} \left(\mu - \frac{\kappa}{H} J(\nu) \right).
\ee
where $\nu$ is the outward normal vector field and $H$ the mean curvature of the sphere of radius $r$ (see \cite[Theorem 7.46]{Lee2019-cz} for details).

Since $H > |\kappa|$ for $r$ sufficiently large, if $H < |\kappa|$ for any sphere, by continuity $H=|\kappa|$, which leads to the existence of a MOTS or MITS enclosing $\partial M$. Therefore the outermost property of $\partial M$ and the absence of MITS outside $\partial M$ implies that we must have $H>|\kappa|$ in the interior of $M$. 

Assuming the S-ADEI \eqref{eqn:SADEI} with respect to $\partial M$, we have
\begin{align} \label{eq:ApplyS-ADEI}
m(r)\Big|^{\infty}_{r_0} = \int_{r_0}^{\infty} m^\prime(r)\,dr & =  \frac{1}{(d-1)}\int_{r_0}^{\infty} r^{d-1} (\mu - \frac{\kappa}{H} J(\nu)) \,dr  \nonumber \\
& \geq  \frac{1}{(d-1)} \int_{r_0}^{\infty} r^{d-1} \left(\mu -  |J|_g\right) \,dr  \geq \frac{K}{(d-1)} \,.
\end{align}
It is easy to check that since $\kappa^2 = H^2 = (d-1)^2 V(r_0) r_0^{d-1}$ at $\partial M$, we have
\begin{align}
\frac{1}{2}\left(\frac{|\partial M|}{\omega_{d-1}}\right)^{\frac{d-2}{d-1}}=\frac{1}{2} r_0^{d-2} =m\left(r_0\right)
\leq m(\infty)  - \frac{K}{(d-1)},
\end{align}
where we have applied the S-ADEI with respect to $\partial M$. Since $m(\infty) = E_{\mathrm{ADM}}$, we conclude
\begin{align}
\frac{1}{2}\left(\frac{|\partial M|}{\omega_{d-1}}\right)^{\frac{d-2}{d-1}} \leq E_{\mathrm{ADM}}  - \frac{K}{(d-1)} \,.
\end{align}
In particular, spherical symmetry implies that the ADM momentum is zero, then $|E_{\mathrm{ADM}}|$ agrees with $m_{\mathrm{ADM}}$.

Returning to case (i), where an outermost MITS $S$ exists, this also ensures that no minimal spheres exist and that $H > |\kappa|$ outside $S$, as the presence of such a surface would imply the existence of another MITS, contradicting the outermost property of $S$. Let $r_0$ be the radius of $S$, then \eqref{eq:ApplyS-ADEI} holds and $\kappa^2 = H^2 = (d-1)^2 V(r_0) r_0^{d-1}$ at $S$. Applying the analogous argument to case (i) and that $\Amin(\partial M) \leq |S|$, we have
\begin{align}
\label{eq:MITSargument}
\frac{1}{2}\left(\frac{\Amin(\partial M)}{\omega_{d-1}}\right)^{\frac{d-2}{d-1}} \leq \frac{1}{2}\left(\frac{|S|}{\omega_{d-1}}\right)^{\frac{d-2}{d-1}}
\leq m_{\mathrm{ADM}}  - \frac{K}{(d-1)},
\end{align}
where we have applied the S-ADEI with respect to $S$ and that $m(\infty) = E_{\mathrm{ADM}}$ and $|E_{\mathrm{ADM}}| = m_{\mathrm{ADM}}$ for spherically symmetric initial data sets.
\end{proof}
The proof of \Cref{cor:reversed} is a direct consequence of $|J|_g=0$ and \eqref{eq:ApplyS-ADEI}, since the first inequality becomes an equality, and the violation of the Penrose inequality follows directly from the violation of the energy condition, and vice versa.

\section{Examples and counterexamples}
\label{sec:examples}

This section is devoted to discussing some examples that violate the pointwise DEC. In certain cases, we can show that despite the pointwise violation, the S-DEC or the S-ADEI can still be satisfied and we can conclude the classical Penrose inequality or its modified version. First, we present examples that satisfy the S-DEC. Then we briefly discuss some counterexamples to the S-DEC and hence the classical Penrose inequality. Finally we present some non-spherically symmetric candidates that nevertheless can obey a form of the $d\nu$-ADEI with the appropriate measure.

\subsection{Effective Quantum Potential}

Casadio \cite{Casadio2022} investigated a quantum description of black holes based on coherent states of gravitons sourced by a matter core. The metric is similar to the Schwarzschild metric, but the Newtonian potential is replaced by an effective quantum potential $V_{\mathrm{QN}}$. The line element takes the form
\be
\mathrm{d}s^2 =-\left( 1+2 \, V_{\mathrm{Q N}} \right)d t^{2}+\frac{\mathrm{d} r^{2}} {1+2 \, V_{\mathrm{Q N}}}+r^{2} \, d \Omega^{2}.
\ee
In Ref.~\cite{Casadio2022} the author considers quantum coherent states that approximate the Newtonian potential at $r \gtrapprox R_H=2G_N m_{\mathrm{ADM}}$, where $G_N$ is the Newton constant. This is achieved by building the coherent states with modes of wavelength larger than some fraction of the size of the gravitational radius $R_H$. In turn, the approach yields an effective quantum potential $V_{\mathrm{QN}}$
\be
V_{\mathrm{QN}} \simeq -\frac{2 G_N M}{\pi r} \int_{0}^{r / R_{\mathrm{s}}}d z \, \frac{\operatorname{s i n} z} {z} \;, 
\ee
where $M$ is the mass of the matter core, and $R_s$ can be interpreted as the finite radius of a would-be-regular matter source. The classical limit is retrieved by setting $R_s \to 0$.

The horizon radius $r = r_H$ is given by the solution of $2 V_{\mathrm{QN}} = -1$, which can be computed numerically for different values of $R_s$. The quantum corrected metric still contains a horizon at $r_H \approx  2 G_N M$ for $R_s$ small.

Since this is a static spacetime, the natural initial data sets are the time-constant slices which are time-symmetric ($\mathcal{K}\equiv 0$). Thus, we only need to compute the scalar curvature to verify whether the S-DEC is satisfied. The scalar curvature $R$ for a time-constant slice is given by
\be
R \simeq \frac{8 G_N M \sin(\frac{r}{R_s})}{\pi r^3} \,,
\ee
which clearly violates the pointwise energy condition DEC. Nevertheless, the scalar curvature is not negative everywhere and exhibits oscillating behavior. Interestingly, for certain values of $R_s$, the classical Penrose inequality remains satisfied whenever the S-DEC is satisfied.

In order to apply the S-DEC, we must determine the horizon position $r_H$ corresponding to each $R_s$ by solving the equation $2V_{QN}=-1$. Setting $G_N = M = 1$, the resulting values of $r_H$ are presented in Fig.~\ref{fig:OscScal}(b). It is observed that $r_H$ oscillates around $2$ as a function of $R_s$, due to the dependence of $V_{QN}$ on $R_s$. The Penrose inequality is satisfied when $r_H\geq 2$, and it is violated otherwise.

Despite the Penrose inequality being satisfied for some values of $R_s$, the scalar curvature exhibits an interesting oscillatory behavior and can be pointwise negative. To illustrate this statement, Fig.~\ref{fig:OscScal}(a) shows the pointwise values of $r^2(\mu-|J|_g)$, which is equals to $\frac{1}{2}r^2 R$, for different values of $R_s$. In particular, for $R_s = 0.18$, the S-ADEI is saturated, i.e., $K\approx 0$, slightly greater values of $R_s$ lead to a negative upper bound for the S-ADEI, resulting in a violation of the classical Penrose inequality. In fact, the inequality is reversed, as shown in Corollary~\ref{cor:reversed}. In contrast, for $R_s$ slightly smaller than $0.18$, both the S-DEC and the classical Penrose inequality are satisfied. This behavior is observed for several other values of $R_s<0.18$. Thus, the Penrose inequality holds even though the scalar curvature follows an oscillatory pattern.

\begin{figure}[!h]
    \centering
    \begin{minipage}[t]{0.48\linewidth}
        \centering        \includegraphics[height=6cm]{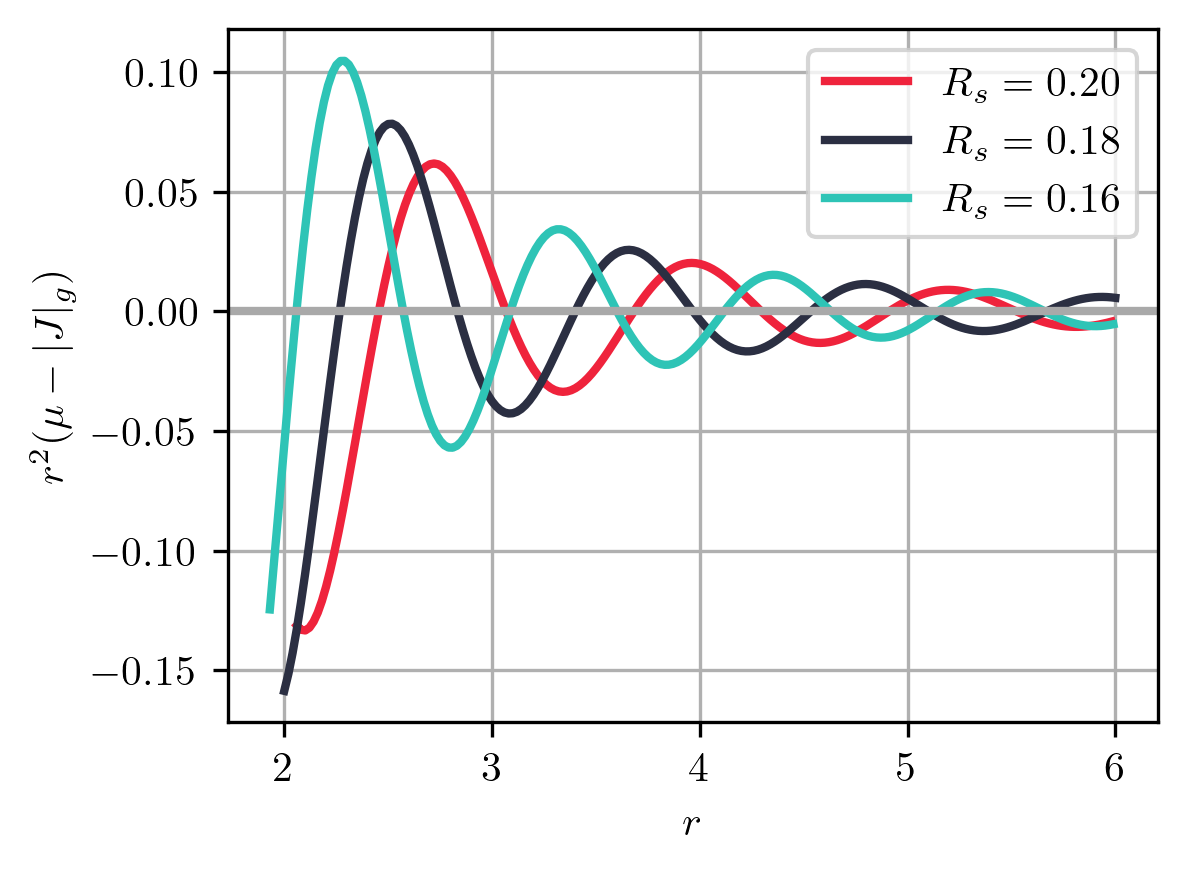}
        \caption*{(a)}
        \label{fig:OscScalA}
    \end{minipage}
    \hfill
    \begin{minipage}[t]{0.48\linewidth}
        \centering       \includegraphics[height=6cm]{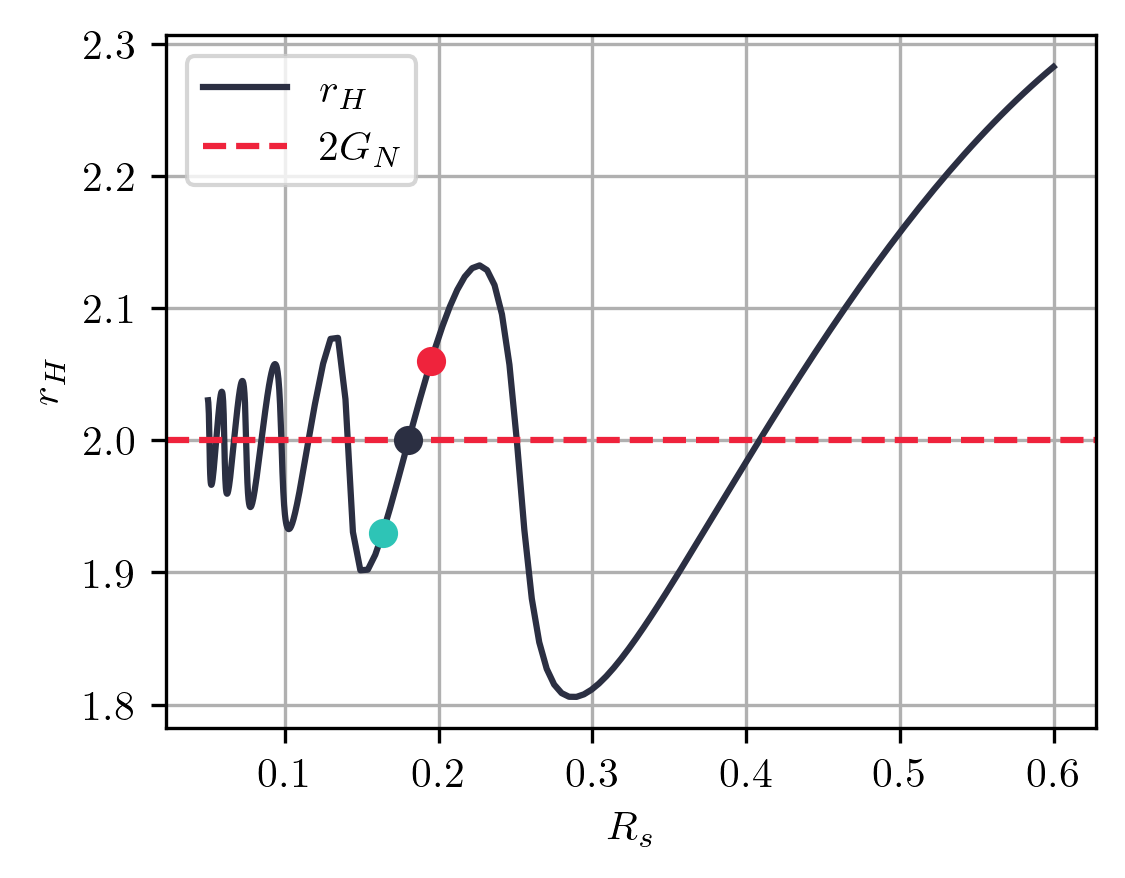}
        \caption*{(b)}
        \label{fig:OscScalB}
    \end{minipage}
    \caption{(a) The integrand of S-DEC for quantum black holes with $G_N = M = 1$. (b) Radius of the horizon $r_H$ for different values of $R_s$. The dots correspond to the lines in plot (a).}
    \label{fig:OscScal}
\end{figure}

\subsection{Third-order perturbation Schwarzschild}

Visser \cite{visser1996} studied conformally coupled massless scalar fields in Schwarzschild spacetime using the Hartle–Hawking vacuum. Through approximations of the stress-energy tensor, he showed that various pointwise energy conditions are violated. In particular, the DEC is violated for $r\lesssim3M$, and the NEC is violated for $r\lesssim2.3M$. However, for $r\gtrsim3M$, these pointwise energy conditions are satisfied again. Unfortunately, no exact four-dimensional metric is available, only approximations near the horizon exist, which are insufficient for our analysis. To circumvent this issue and reproduce these features, we focus on a class of examples represented by a third-order perturbation of Schwarzschild spacetime, parameterized by constants $A$ and $B$, which exhibits similar behavior near the horizon.

We propose the following family of metrics 
\be
\overline{g}= -f(r)d t^{2} + \frac{1}{f(r)}dr^{2}+r^{2}d\Omega^{2} \,,
\ee
where
\be
f(r) = \left(1-\frac{2M}{r}\right)+M\left(\frac{A}{r^2}-\frac{B}{r^3}\right) \,,
\ee
parameterized by constants $A$ and $B$. The horizon location $r_H$ can be determined by the roots of $f(r) = 0$. We constraint our analysis to values of $A$ and $B$ such that $f(r)$ has only one real positive solution.

Again, since this is a static spacetime, we only consider time-constant slices. Therefore, the scalar curvature for a time-constant slice is given by
\be
R = 2M \frac{(A r - 2B)}{r^5} \,.
\ee
Note that if $B=0$, the scalar curvature is always positive or negative, depending on the sign of $A$, and the metric looks like a charged Schwarzschild metric for positive $A$. The pointwise values of the function $r^2(\mu-|J|_g)$, with fixed $B = M = 1$ and varying values of $A$, are shown in Fig.~\ref{fig:ThirdOrdSch}.

\begin{figure}[!h]
    \centering
    \includegraphics[width=0.5\linewidth]{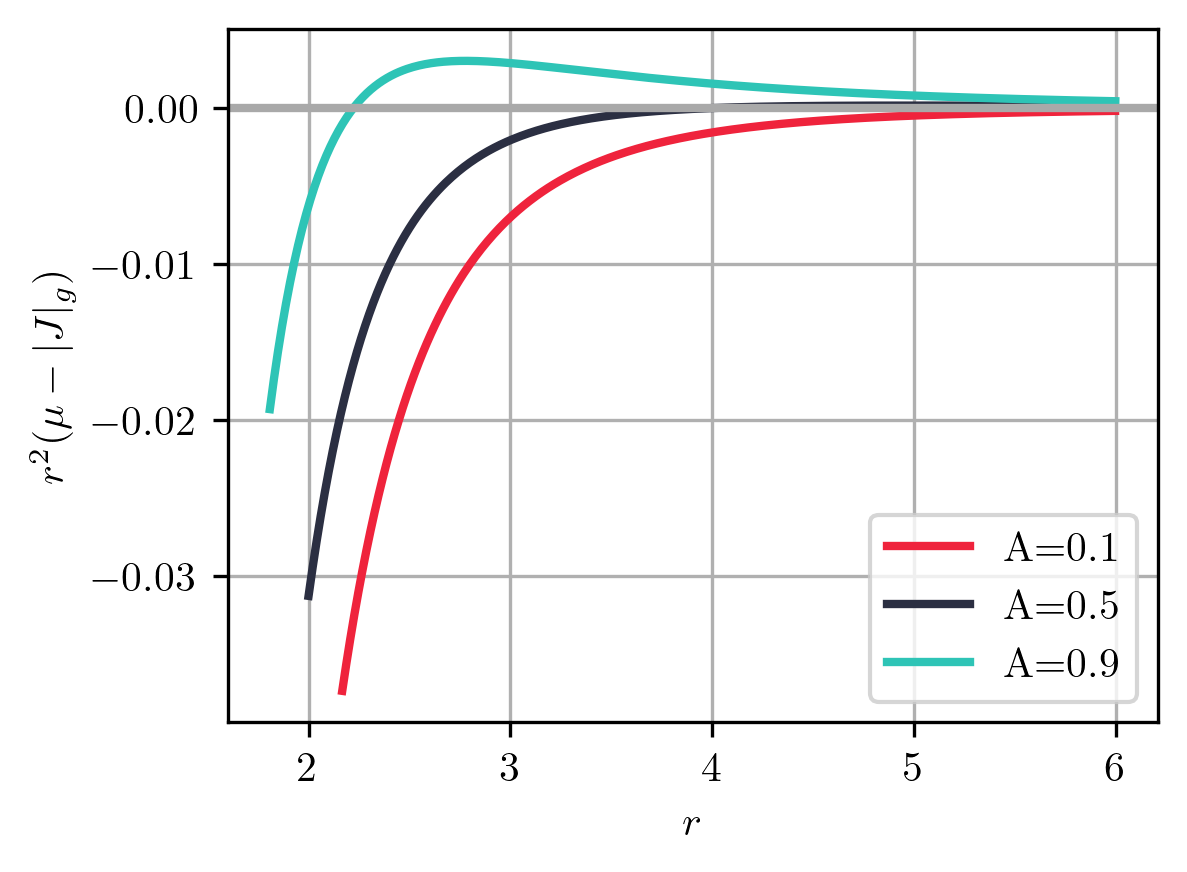}
    \caption{The integrand of S-DEC for the third-order modified Schwarzschild solution with $B = M = 1$ and varying $A$ demonstrates different behaviors.}
    \label{fig:ThirdOrdSch}
\end{figure}
Despite the presence of negative scalar curvature near the horizon, the classical Penrose inequality remains satisfied for both $A = 0.9$ and $A = 0.5$. However, for any values $A<0.5$, such as $A = 0.1$ shown in Fig.~\ref{fig:ThirdOrdSch}, the S-DEC is bounded both below and above by a negative constant, leading to a violation of the classical Penrose inequality and the reverse applies according to Corollary~\ref{cor:reversed}. Therefore, the classical Penrose inequality can still be satisfied even if the pointwise energy conditions are violated near the horizon, provided S-DEC holds.

\subsection{Vacuum Polarization}

Quantum-corrected metrics of the Schwarzschild solution from vacuum polarization without an event horizon are also found in the literature \cite{NavarroQC2023}. The authors obtained a static, spherically symmetric solution to the four-dimensional semiclassical Einstein equations, using the quantum vacuum polarization of a conformal field as a source. The metric is given by
\be
\overline{g}=-e^{-2 \phi(r)} d t^2+\frac{dr^2}{1-\frac{2m(r)}{r}}+r^2 d \Omega^2 \,,
\ee
where $\phi(r)$ and $m(r)$ are smooth functions that incorporate the vacuum polarization effects (see Eq. (18) and (19) \cite{NavarroQC2023}).

This metric has a coordinate singularity at $r_0$, the solution of $\overline{g}^{rr}(r_0)=0$ imposing $2m(r_0)=r_0$. The value of $r_0$ is close to the Schwarzschild coordinate singularity at $2M$, but is shifted by a positive amount that depends on $\hbar$\footnote{The units in Ref.~\cite{NavarroQC2023} are the geometrized units  $G_N=c=1$ to keep $\hbar$ explicit}. It is given by 
\be
r_0 = 2M + \frac{\sqrt{\hbar}}{4 \sqrt{70 \pi}} + \mathcal{O}(\hbar).
\ee
At $r_0$, unlike in the Schwarzschild case, there is no horizon since the metric component $\overline{g}_{tt}$ does not vanish at this point, meaning the static spacetime does not contain a horizon and thus does not describe a black hole. This fact is due to the backreaction effect that the quantum vacuum may produce in the metric. Despite the lack of a horizon, initial data sets from this spacetime are geometrically interesting. The coordinate singularity $r_0$ corresponds to a wormhole throat\footnote{\samepage Since this is a traversable wormhole, at least for light rays, it must violate the average null energy condition (ANEC) \cite{Graham:2007va}. The authors of \cite{NavarroQC2023} do not comment on that fact but from their construction it seems that there are complete null geodesics through the wormhole throat. This is an example where the S-DEC and the ANEC are violated. A similar construction is shown in \cite{Visser2003} where traversable wormholes are considered. The authors construct examples (see Specialization 1 \cite{Visser2003}) such that the $t$ constant slices ($\mathcal{K}\equiv0$) satisfy $\mu=0$ but the ANEC is violated. From this, it is clear that their examples satisfy the S-DEC while still violating the ANEC. These examples confirm that the ANEC and the S-DEC are independent conditions.} and is a minimal surface. Therefore, this surface is still suitable for our result and can be used to relate its area to the ADM mass.

The constant $t$-slices are time-symmetric, then S-ADEI depends only on the scalar curvature of the $t$-slice. In particular, a standard computation reveals that the scalar curvature is pointwise negative everywhere, so it violates the S-DEC and then the classical Penrose inequality by \Cref{cor:reversed}. Nevertheless, the following inequality holds
\be
m_{\mathrm{ADM}} + \OO(\sqrt{\hbar})  \gtrapprox \sqrt{\frac{|\partial M|}{16 \pi}} \,,
\ee  
where $\OO(\sqrt{\hbar})$ can be computed by explicitly using \eqref{eq:S-ADEI} and $\partial M$ is the surface defined by the coordinate singularity at $r_0$. Although the S-DEC does not hold, a form of the S-ADEI with $K$ depending on $\sqrt{\hbar}$ still holds, leading to a modified Penrose inequality.

In a subsequent paper \cite{navarro2024black}, the horizon was recovered by assuming a leading-order approximation, where the classical stress-energy tensor of the fundamental fields is traceless. The resulting metric takes the form of a charged Schwarzschild black hole, thereby satisfying the usual pointwise energy conditions.

Since the scalar curvature is negative pointwise, this example also violates the Weak Energy Condition (WEC), which is a weaker condition than the DEC, and requires $\mu \geq 0$ rather than $\mu\geq |J|_g$. Other physically relevant examples with quantum effects and violating the pointwise WEC in black hole spacetimes were proposed in Refs.~\cite{BoussoMoghaddamTomasevic2019, BoussoMogTom2}. These aim to serve as counterexamples in the same spirit as black hole evaporation is for Hawking’s area theorem, i.e., exhibiting a state where the null energy condition fails and so does the theorem conclusion. The construction considers a Boulware-like state of a massless scalar field in Schwarzschild spacetime and a Hartle–Hawking-like state near the horizon. The resulting configuration introduces negative energy outside the black hole on a time-symmetric initial slice, leading to a violation of the WEC. By carefully combining these vacua, this amount of negative energy is sufficient to construct an initial data set that violates the classical Penrose inequality not in order $O(\hbar)$ but in order $O(1)$. From the construction, it is possible to verify that the S-DEC is violated and as the initial data set is time-symmetric, $J\equiv0$, we can apply \Cref{cor:reversed} to conclude that a reverse Penrose inequality holds. It is unclear if S-ADEI is satisfied in this case.

\subsection{Violations of the Penrose inequality under the weak energy condition}

One approach to proving the Penrose inequality involves assuming the WEC ($\mu \geq 0$) instead of the DEC ($\mu \geq |J|_g$). In this section, we will see that counterexamples also exist within this class. However, it is possible to formulate a modified Penrose inequality by appropriately estimating $K$ based on the energy condition.

Recently, Ref.~\cite{Jaracz2022} presented a spherically symmetric initial data set with metric of the form \eqref{eq:SphrSymMetric} and $\partial M$ being an outermost MOTS that satisfies the WEC but violates the Penrose inequality. The core idea presented by the author involves prescribing scalar curvature on a spherically symmetric manifold and defining an appropriate $\mathcal{K}$ to ensure the WEC. In particular, this construction yields $\mu = 0$ and a nonzero one-form $J$, where $|J|_g \geq 0$ and $|J|_g\neq 0$. 

The argument relies on the Hawking mass and  inverse mean curvature flow. The Hawking mass is defined as 
\be
m_H(S)=\sqrt{\frac{|S|}{16 \pi}}\left(1-\frac{1}{16 \pi} \int_S H^2 d S\right) \,,
\ee
where $H$ is the mean curvature of the closed dimension-$2$ hypersurface $S$. It is a well-established result that if $S_t$ denotes a family of surfaces flowing out to infinity \cite{HuiskenIlmanen2001}, with each surface homotopic to the 2-sphere, then
\be
\lim_{t \rightarrow \infty} m_H(S_t) = E_{\mathrm{ADM}}
\ee
The construction in \cite{Jaracz2022} yields a sequence of pairs $(g_n, \mathcal{K}_n)$ satisfying 
\be \label{eq:Energy-Hmass}
E_{\mathrm{ADM}} - m_H(\mathbb{S}_1) = \lim_{t\rightarrow \infty} m_H(S_t) - m_H(\mathbb{S}_1) = -\frac{3}{4}
\ee
where $S_t$ is a family of surfaces obtained by flowing a coordinate sphere $r=1$ under the inverse mean curvature flow (see proof of Theorem 1.1 \cite{Jaracz2022}). The unit sphere $\mathbb{S}_1$ forms the boundary of the initial data set, though it is not necessarily an outermost MOTS. The outermost minimal area enclosure of the outermost MOTS $\Sigma$ (depending on $n$) has radius $r\geq 1$ and its minimal area enclosure must agree with $\Sigma$ by spherical symmetry \eqref{eq:SphrSymMetric} and uniqueness. This gives the following chain of inequalities 
\be \label{eq:AminArg}
\sqrt{\frac{A_{\min} (\Sigma)}{16 \pi}}=\sqrt{\frac{|\Sigma|}{16 \pi}} \geq \sqrt{\frac{\left|\mathbb{S}_1\right|}{16 \pi}}=\frac{1}{2}.
\ee
As $n \rightarrow \infty$, it is possible to show that $E_{\mathrm{ADM}}\rightarrow -\frac{1}{4}$, and then, for $n$ sufficiently large, we have
\be
\sqrt{\frac{A_{\min} (\Sigma)}{16 \pi}} \geq \frac{1}{2} > |E_{\mathrm{ADM}}| = m_{\mathrm{ADM}}.
\ee
Interestingly, Eq.~\eqref{eq:Energy-Hmass} can be connected with the S-ADEI and gives an explicitly bound. By Eq.~\eqref{eq:ApplyS-ADEI}, we have
\begin{align}
\frac{1}{2}K \leq \frac{1}{2}\int_{r_0}^\infty r^2(\mu-|J|_g)dr & \leq E_{\mathrm{ADM}} - \sqrt{\frac{|\Sigma|}{16 \pi}}\nonumber \\
& \leq  m_{\mathrm{ADM}} - \sqrt{\frac{|\mathbb{S}_1|}{16 \pi}} \nonumber\\ 
& = \lim_{t\rightarrow \infty}m_H(S_t) - m_H(\mathbb{S}_1) = -\frac{3}{4},
\end{align}
as a result, S-ADEI must have $K\leq-3/4$. Hence, an appropriate correction must be incorporated to preserve the inequality, as indicated by \Cref{PI-SADEC}. This adjustment enables formulating a modified Penrose inequality characterized by negative $K$.

\subsection{Non-spherically symmetric candidates}
\label{sec:NonSphSymCand}

While Theorem~\ref{PI-SADEC} applies to spherically symmetric initial data, there are some specific cases where while spherical symmetry is not present, if a version of the $d\nu$-ADEI \eqref{eqn:dnadei} is assumed, we can recover a modified Penrose inequality. 

\subsubsection{Graphical Case}
Lam \cite{lam2010} showed that an end of a time-constant slice of the 3-dimensional Schwarzschild metric can be isometrically embedded in $\mathbb{R}^4$ as the graph of a function over $\mathbb{R}^3 \setminus B_{2M}(0)$. Inspired by this example, he proposed a Penrose inequality for Riemannian manifolds $(M^d, g) = (\mathbb{R}^d, \delta + df \otimes df)$ given as the graph of a function $f:\mathbb{R}^d\setminus \Omega \rightarrow \mathbb{R}$, where $\Omega$ is an open set, satisfying certain conditions (see \Cref{theo:graph_theo}). Interestingly, all that is required to show the Penrose inequality is the non-negativity of
\be
\int_{M^d} R \frac{1}{\sqrt{1+|\nabla f|^2}} dV_g \,,
\ee
where $dV_g$ is the volume element induced by the metric. In terms of the integral condition, this shows that we can conclude a modified Penrose inequality if one assumes the $d\nu$-ADEI \eqref{eqn:dnadei} for the graphical manifold with 
\be \label{eq:Graphnu}
d\nu=\frac{1}{\omega_{d-1}\sqrt{1+|\nabla f|^2}}dV_g = \frac{1}{\omega_{d-1}}dV_\delta \,,
\ee
where $dV_\delta$ is the volume form of $\mathbb{R}^d$. Employing this measure, the result can be stated as follows. 

\begin{theorem}\label{theo:graph_theo}
 Let $\Omega$ be a bounded and open (but not necessarily connected) set in $\mathbb{R}^{d}$ and $\Sigma=\partial\Omega$ . Let $f : \mathbb{R}^{d} \backslash\Omega\to\mathbb{R}$ be a smooth asymptotically flat function such that each connected component of $f ( \Sigma)$ is in a level set of $f$ and $| \nabla f ( x ) | \to\infty$ as $x \to\Sigma$ . Let $( M^{d}, g )$ be the graph of $f$ with the induced metric from $\mathbb{R}^{d} \backslash\Omega\times\mathbb{R}$ and ADM mass $m_{\mathrm{ADM}}$. If each connected component of $\Omega_i$ of $\Omega$ is convex and the $d\nu$-ADEI holds then
\be
m_{\mathrm{ADM}} \geq\sum_{i=1}^{k} \frac{1} {2} \left( \frac{| \Sigma_{i} |} {\omega_{d-1}} \right)^{\frac{d-2} {d-1}} + \frac{K}{(d-1)}. 
\ee
\end{theorem}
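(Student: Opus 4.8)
The plan is to follow Lam's graphical argument \cite{lam2010} essentially verbatim and to isolate the single place where the integral condition replaces the pointwise sign of the scalar curvature. Since $(M^d, g) = (\mathbb{R}^d\setminus\Omega,\, \delta + df\otimes df)$ is purely Riemannian, I would regard it as a time-symmetric initial data set, taking $\mathcal{K}\equiv 0$. Then $J\equiv 0$ and, by \eqref{eqn:rhoids}, the integrand of the $d\nu$-ADEI reduces to $\mu - |J|_g = \tfrac12 R_g$. With the measure \eqref{eq:Graphnu}, and using $dV_g = \sqrt{1+|\nabla f|^2}\,dV_\delta$, the hypothesis \eqref{eqn:dnadei} becomes a lower bound on the Euclidean-weighted total scalar curvature,
\[
\frac{1}{2\omega_{d-1}}\int_{\mathbb{R}^d\setminus\Omega} \frac{R}{\sqrt{1+|\nabla f|^2}}\,dV_g \;=\; \frac{1}{2\omega_{d-1}}\int_{\mathbb{R}^d\setminus\Omega} R\,dV_\delta \;\geq\; K .
\]

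The geometric core is Lam's divergence identity: for the graph metric, $R\,dV_\delta$ is, up to a constant, an exact Euclidean divergence, so integrating over $\mathbb{R}^d\setminus\Omega$ and using the asymptotic flatness of $f$ to read off the flux at infinity yields the mass formula
\[
2(d-1)\omega_{d-1}\, m_{\mathrm{ADM}} \;=\; \int_{\mathbb{R}^d\setminus\Omega} R\,dV_\delta \;+\; \sum_{i=1}^{k}\int_{\Sigma_i} H_{\Sigma_i}\,dA ,
\]
where $H_{\Sigma_i}$ is the Euclidean mean curvature of $\Sigma_i$ in $\mathbb{R}^d$. The two boundary hypotheses do the work in the inner boundary term: because each component of $f(\Sigma)$ lies in a level set and $|\nabla f|\to\infty$ as $x\to\Sigma$, the graph becomes asymptotically vertical over $\Sigma_i$, which renders $\Sigma_i$ minimal in $(M,g)$ and collapses the limiting boundary integrand to the Euclidean mean curvature of $\Sigma_i$. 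Carefully justifying this limit is the bulk of the technical work, but it is exactly the computation carried out in \cite{lam2010}, so I would cite it rather than reproduce it.

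What remains is to combine two lower bounds. The $d\nu$-ADEI gives $\int_{\mathbb{R}^d\setminus\Omega} R\,dV_\delta \geq 2\omega_{d-1}K$, and the Aleksandrov--Fenchel inequality, applicable precisely because each $\Omega_i$ is convex, gives
\[
\int_{\Sigma_i} H_{\Sigma_i}\,dA \;\geq\; (d-1)\,\omega_{d-1}\left(\frac{|\Sigma_i|}{\omega_{d-1}}\right)^{\frac{d-2}{d-1}},
\]
with equality on round spheres. Substituting both into the mass formula and dividing by $2(d-1)\omega_{d-1}$ produces the claimed inequality. The main obstacle is therefore not the new content---inserting the bound $K$ is a one-line step---but rather the two classical ingredients that are being invoked: the limiting analysis of Lam's boundary term, and the reliance on convexity through Aleksandrov--Fenchel, which is the only hypothesis ensuring that the boundary mean-curvature integral controls the area term $(|\Sigma_i|/\omega_{d-1})^{(d-2)/(d-1)}$. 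Setting $K=0$ recovers Lam's original graphical Penrose inequality, in agreement with the remark that only non-negativity of the weighted scalar-curvature integral is needed there.
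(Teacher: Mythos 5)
Your proposal is correct and takes essentially the same route as the paper: the paper simply quotes Lam's inequality $m_{\mathrm{ADM}} \geq \sum_i \frac{1}{2}\left(|\Sigma_i|/\omega_{d-1}\right)^{\frac{d-2}{d-1}} + \frac{1}{2(d-1)}\int_M R\, d\nu$ from \cite{lam2010} and then inserts the $d\nu$-ADEI bound, exactly as you do. The only difference is that you unpack the two ingredients hidden inside Lam's result (the divergence identity for the graph scalar curvature and the Aleksandrov--Fenchel inequality for the convex boundary components), which the paper leaves implicit in the citation; your bookkeeping of the constants is consistent with the stated theorem.
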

\begin{proof}
From Ref.~\cite{lam2010} we have that 
\be
m_{\mathrm{ADM}} \geq\sum_{i=1}^{k} \frac{1} {2} \left( \frac{| \Sigma_{i} |} {\omega_{d-1}} \right)^{\frac{d-2} {d-1}}+\frac{1}{2  (d-1)} \int_{M^{d}} R \, d\nu. 
\ee 
If we assume $d\nu$-ADEI \eqref{eqn:dnadei} where $d\nu$ is given by \eqref{eq:Graphnu} then the result follows. 
\end{proof}

According to \cite[Remark 11]{lam2010}, for spherically symmetric functions $f=f(r)$ on $\mathbb{R}^d$, the mass is always nonnegative and then there are no spherically symmetric asymptotically flat smooth functions on $\mathbb{R}^d$ whose graphs have negative scalar curvature everywhere. Then even if the DEC is violated in this case there is a lower bound for $K$ according to Corollary~\ref{cor:lowerboundK}. 

\subsubsection{Inverse Mean Curvature Flow}
Whenever there is a lack of symmetry, the choice of $d\nu$ becomes more complicated and must be adapted for the specific technique being employed. According to the discussion in \cite[Section 9]{PDE_PI}, in the proof of the Riemannian Penrose inequality using inverse mean curvature flow, one may assume a lower bound for
\be
\int_0^\infty\int_{\Sigma(t)} \sqrt{\frac{|\Sigma(t)|_g}{16\pi}} \frac{R}{16\pi} dA dt \,,
\ee
for a family of surfaces $\Sigma(t)$ evolving under inverse mean curvature flow (see \cite{HuiskenIlmanen2001}). Applying the co-area formula, we have that $dA dt = H d\mu_g$, where $H$ is the mean curvature of $\Sigma(t)$. Since $|\Sigma(t)|_g = |\Sigma(0)|_g e^t$, if the flow exists, one could take 
\be
d\nu = C H e^{t/2} d\mu_g,
\ee
where 
\begin{equation}
C=\frac{1}{8\pi}\sqrt{\frac{|\Sigma(0)|_g}{16\pi}},
\end{equation}
to conclude the following inequality
\be
m_{\mathrm{ADM}} -\sqrt{\frac{\Amin(\Sigma)}{16 \pi}} \geq \frac{1}{2}\int_M R \, d\nu  \,.
\ee
If the inverse mean curvature flow does not exist, one could use the weak solution to Huisken-Ilmanen inverse mean curvature flow $u$ (see \cite{HuiskenIlmanen2001}), which is equals zero on $\Sigma$, and define analogously
\be \label{eq:IMFdnu}
d\nu = C |\nabla u|_g e^{u/2} d\mu_g.
\ee
Therefore, Theorem 7 in~\cite{PDE_PI} can be extended as follows:
\begin{theorem}
    Given an asymptotically flat $(M^3, g)$ with $H_2(M)=0$ and a minimal connected boundary $\Sigma$ which bounds an interior region. If the $d\nu$-ADEI holds then
    \be
    m_{\mathrm{ADM}} \geq \sqrt{\frac{\Amin(\Sigma)}{16 \pi}} + \frac{K}{2} \,.
    \ee
\end{theorem}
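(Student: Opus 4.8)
The plan is to specialize the $d\nu$-ADEI to the time-symmetric situation and then run the Huisken--Ilmanen weak inverse mean curvature flow while \emph{retaining} the scalar-curvature term in the Geroch monotonicity formula, rather than discarding it as the classical $R_g\geq0$ hypothesis permits. Since $(M^3,g)$ is Riemannian ($\mathcal{K}\equiv0$), we have $\mu=\tfrac12 R_g$ and $J=0$, so the $d\nu$-ADEI reduces to the weighted scalar-curvature bound $\int_M(\mu-|J|_g)\,d\nu=\tfrac12\int_M R_g\,d\nu\geq K$ for the measure $d\nu=C|\nabla u|_g e^{u/2}d\mu_g$ of \eqref{eq:IMFdnu}.

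First I would invoke the existence theory of \cite{HuiskenIlmanen2001} to produce the weak solution $u$ of IMCF with $u=0$ on $\Sigma$. The hypotheses that $\Sigma$ is a connected minimal boundary enclosing an interior region and that $H_2(M)=0$ are precisely what guarantee that the level sets $\Sigma(t)=\{u=t\}$ form a connected family of topological spheres exhausting the asymptotically flat end, with exponential area growth $|\Sigma(t)|_g=|\Sigma(0)|_g\,e^{t}$, and that the initial jump of the weak flow lands on the outermost minimal-area enclosure, of area $\Amin(\Sigma)$. Because that enclosure is itself minimal, its Hawking mass equals $\sqrt{\Amin(\Sigma)/16\pi}$, while the asymptotic analysis of the flow near infinity gives $m_H(\Sigma(t))\to m_{\mathrm{ADM}}$; crucially, this limit is insensitive to the sign of $R_g$.

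The heart of the argument is the weak Geroch monotonicity identity. Integrating it from $t=0$ to $t=\infty$ and dropping only the manifestly non-negative contributions (the $|\nabla_\Sigma H|^2/H^2$ and traceless second-fundamental-form terms, the intrinsic Gauss--Bonnet term being pinned by the sphere topology enforced by $H_2(M)=0$) leaves
\begin{equation}
m_{\mathrm{ADM}}-\sqrt{\frac{\Amin(\Sigma)}{16\pi}}\;\geq\;\int_0^\infty\!\!\int_{\Sigma(t)}\sqrt{\frac{|\Sigma(t)|_g}{16\pi}}\,\frac{R_g}{16\pi}\,dA\,dt .
\end{equation}
The co-area formula $dA\,dt=|\nabla u|_g\,d\mu_g$ together with $\sqrt{|\Sigma(t)|_g}=e^{t/2}\sqrt{|\Sigma(0)|_g}$ recasts the right-hand side as $\tfrac12\int_M R_g\,d\nu$ with the stated $d\nu$. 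Feeding in the $d\nu$-ADEI, i.e. $\tfrac12\int_M R_g\,d\nu\geq K$, then produces the asserted lower bound for $m_{\mathrm{ADM}}$.

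The main obstacle is conceptual rather than computational: because $R_g$ may be negative, $m_H(\Sigma(t))$ need no longer be monotone, so the entire argument must be organized as a single integrated \emph{signed} inequality instead of a monotonicity statement, and one must check that the weak Geroch formula survives intact with the curvature term retained and that $\int_M R_g\,d\nu$ is bounded below (which the ADEI supplies) so the right-hand side is meaningful. Verifying the Huisken--Ilmanen weak-flow hypotheses under $H_2(M)=0$ and tracking the constant $C$ through the co-area step are then routine, so essentially only the sign-robust repackaging of \cite[Theorem 7]{PDE_PI} requires care.
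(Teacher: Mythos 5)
You take essentially the same route as the paper: it too obtains this theorem by retaining the scalar-curvature term in the integrated Geroch monotonicity along the Huisken--Ilmanen weak inverse mean curvature flow (citing \cite[Section 9]{PDE_PI}), rewriting that term via the co-area formula and the exponential area growth as $\tfrac12\int_M R\,d\nu$ with $d\nu = C|\nabla u|_g e^{u/2}\,d\mu_g$, and then inserting the $d\nu$-ADEI. One caveat you share with the paper: reading the ADEI literally as $\int_M(\mu-|J|_g)\,d\nu=\tfrac12\int_M R\,d\nu\geq K$, this argument yields the additive constant $K$ rather than the stated $K/2$, a normalization mismatch that is harmless for $K\geq 0$ but matters when $K<0$.
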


\subsubsection{$\mathrm{SU}(d+1)$ invariant initial data sets}

For non-spherically symmetric initial data set and $\mathcal{K}\neq0$, \cite{Khuri:2024rfy} established the Penrose inequality for asymptotically flat $\mathrm{SU}(d+1)$, $d\geq1$, invariant initial data sets and for cohomogeneity one time-symmetric initial data sets. The proof uses the idea of finding $(f, \phi)$ such that $(M, g, \mathcal{K})$ comes from a slice of the static spacetime 
\be
(\mathbb{R} \times M^d, -\phi^2dt^2+\tilde{g})
\ee
and $g_{ij} = \tilde{g}_{ij}- \phi^2 f_i f_j$. To achieve this, the pair $(f, g)$ must satisfy the generalized Jang equation (see ~\cite[Section 6]{PDE_PI}).

For asymptotically flat $\mathrm{SU}(d+1)$, $d \geq1$, invariant initial data sets, with an outermost MOTS boundary $\partial M$, they showed that
\be
m_{\mathrm{ADM}} - \frac{1}{2}\left( \frac{|\partial M|} {\omega_{2 d+1}} \right)^{\frac{2 d} {2 d+1}}  \geq \frac{1}{(2d+1) \omega_{2d+1} }\int_{M} (\mu - |J|_g) \phi dV_{\tilde{g}}.
\ee
Assuming the pointwise DEC, this leads to the classical Penrose inequality. Analogously, one could assume a lower bound for this quantity to derive a modified Penrose inequality. Therefore, we could set
\be\label{eq:Cohomdnu}
d\nu = \frac{\phi}{\omega_{2d+1}} \, dV_{\tilde{g}},
\ee
and obtain the following result.

\begin{theorem}
Let $( M^{2 ( d+1 )}, g, \mathcal{K})$ , $d \geq1$ be an asymptotically flat $\mathrm{SU}(d+1)$ invariant initial data set, with outermost MOTS boundary. If the $d\nu$-ADEI, as given in \eqref{eq:Cohomdnu}, holds, then 
\be
m_{\mathrm{ADM}} \geq \frac{1}{2} \left( \frac{|\partial M|} {\omega_{2 d+1}} \right)^{\frac{2 d} {2 d+1}} + \frac{K}{(2d+1)} \,.
\ee
\end{theorem}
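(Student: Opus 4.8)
The plan is to reduce this theorem to a direct application of the inequality already quoted from Ref.~\cite{Khuri:2024rfy}, exactly in the spirit of the proof of \Cref{theo:graph_theo}. The substantive geometric work—constructing the pair $(f,\phi)$ solving the generalized Jang equation and running the argument that produces the mass--area inequality for $\mathrm{SU}(d+1)$ invariant data—is carried out in the cited reference, so I would treat it as a black box and simply invoke its output.

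Concretely, the first step is to recall the Khuri--type bound stated above the theorem, namely
\be
m_{\mathrm{ADM}} - \frac{1}{2}\left( \frac{|\partial M|} {\omega_{2 d+1}} \right)^{\frac{2 d} {2 d+1}}  \geq \frac{1}{(2d+1) \omega_{2d+1} }\int_{M} (\mu - |J|_g) \phi \, dV_{\tilde{g}}.
\ee
The second step is the key observation that the integral appearing on the right is, up to the factor $1/(2d+1)$, precisely the integral defining the $d\nu$-ADEI for the measure \eqref{eq:Cohomdnu}: since $d\nu = \phi\,\omega_{2d+1}^{-1}\,dV_{\tilde{g}}$, we have
\be
\frac{1}{(2d+1) \omega_{2d+1}}\int_{M} (\mu - |J|_g)\phi \, dV_{\tilde{g}} = \frac{1}{2d+1}\int_{M}(\mu - |J|_g)\,d\nu.
\ee

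The third and final step is to apply the hypothesis. The $d\nu$-ADEI \eqref{eqn:dnadei} with the measure \eqref{eq:Cohomdnu} asserts exactly that $\int_{M}(\mu-|J|_g)\,d\nu \geq K$, so the right-hand side above is bounded below by $K/(2d+1)$. Chaining this with the Khuri bound and rearranging yields
\be
m_{\mathrm{ADM}} \geq \frac{1}{2}\left( \frac{|\partial M|}{\omega_{2d+1}} \right)^{\frac{2d}{2d+1}} + \frac{K}{2d+1},
\ee
which is the claimed inequality. There is no genuine obstacle at this level: the only thing to verify carefully is the bookkeeping identifying the measure and the normalizing constants $\omega_{2d+1}$ and $(2d+1)$, and the fact that $\phi>0$ (guaranteed by its role as the lapse in the static lift $(\mathbb{R}\times M^d, -\phi^2 dt^2 + \tilde g)$) so that $d\nu$ is a genuine positive measure and the ADEI is meaningfully applied. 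All the difficulty is inherited from, and deferred to, the underlying result of \cite{Khuri:2024rfy}.
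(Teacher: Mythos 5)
Your proposal is correct and matches the paper's own (implicit) argument exactly: the paper likewise treats the mass--area inequality of Ref.~\cite{Khuri:2024rfy} as a black box, observes that its right-hand side is $\tfrac{1}{2d+1}\int_M(\mu-|J|_g)\,d\nu$ for the measure \eqref{eq:Cohomdnu}, and bounds it below by $K/(2d+1)$ via the $d\nu$-ADEI. Your write-up simply makes explicit the one-line rearrangement the paper leaves to the reader.
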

\begin{remark}
In the time-symmetric case, \cite{Khuri:2024rfy} extend the result to cohomogeneity-one initial data sets satisfying $R\geq 0$. However, this result does not generalize easily to our setting, as its proof relies on the positive mass theorem, which in turn requires pointwise nonnegative scalar curvature.  
\end{remark}

What is clear from these examples is that whenever there is a lack of symmetry but a natural choice of $d\nu$ exists, the measure $d\nu$ is adapted to the type of proof being applied. We should note that these conditions still allow for violations of the pointwise DEC.

\section{Dynamical black holes and weak cosmic censorship}
\label{sec:dynamical}

The preceding part of this work focused on proving a Penrose inequality of the form of Eq.~\eqref{eqn:peninitial} for initial data sets. From earlier results, it is known that averaged energy conditions are sufficient to prove null geodesic incompleteness \cite{Fewster:2010gm, Fewster:2019bjg}, and in \cite{Kontou:2023ntd} they are used to bound the evaporation rate of black holes. This can point to a generalized form of Penrose's original motivation; the connection with weak cosmic censorship.

\subsection{Non-evaporating dynamical black holes}

The classical Penrose inequality was first proposed for spacetimes satisfying the NCC \eqref{eqn:NCC} or the NEC along with Einstein's equations. This condition plays a crucial role in allowing one to invoke Hawking's black hole area theorem to establish the inequality. 

The S-DEC is motivated by potential violations of the DEC, particularly near the horizon. It is also clear that the S-DEC permits violations of the NEC. Given this context, one might conclude that the S-DEC could be either incompatible with or too weak to support Penrose's original heuristic argument, as it does not allow for the application of Hawking's black hole area theorem.

A potential way to circumvent this issue is to apply the recently established version of Hawking's black hole area theorem by \cite{Kontou:2023ntd}, which permits violations of the NCC. 

To introduce the relevant averaged energy condition, we first recall the criterion for the formation of a focal point, a point where a Jacobi field becomes zero on null geodesics emanating normally from a spacelike submanifold $P$. The test to determine the formation or not of a focal point for causal geodesics is the sign of the Hessian of the action integral $E$ of the geodesic. 

In particular for a curve \(\gamma : [0, \ell] \rightarrow \overline{M}\) affinely parameterized by \(\lambda\), \(E\) is defined as
\be
E[\gamma] := \frac{1}{2}\int_{0}^{\ell} \overline{g}(\gamma'(\lambda), \gamma'(\lambda))d\lambda \,.
\ee
Now consider the set of all piecewise smooth curves joining \(P\) to a point \(q\),  leading to a family of curves \(\gamma_s(\lambda):=\zeta(\lambda, s)\) which varies smoothly in $s$. The tangent and the transverse vector fields are defined as $U_\mu=\gamma'(\lambda)$ and $V_\mu=\partial \gamma_s/\partial s |_{s=0}$. The first variation of  $E[\gamma_s]$ is zero for geodesics while its second variation is
\be
	\label{eq:hessian}
	\mathcal{H}[V]\equiv \frac{\partial^2E[\gamma_s]}{\partial s^2}\Big\vert_{s = 0} = 
	\int_{0}^{\ell} \left[(\nabla_UV_{\mu})(\nabla_UV^{\mu}) + R_{\mu\nu\alpha\beta}U^{\mu}V^{\nu}V^{\alpha}U^{\beta}\right] d\lambda -U^{\mu} \nabla_V V_{\mu} \Big\vert_0^{\ell}\,.
\ee
What is left is to choose a convenient $V$, Let \(e_i\) with \(i = 1, \ldots, n - 2\) be an orthonormal basis of \(T_{\gamma(0)}P\), and parallel transport it along \(\gamma\) to generate \(\{E_i\}_{i = 1, \ldots, n-2}\). Then, take \(f\) a smooth function with \(f(0) = 1\) and \(f(\ell) = 0\). Calculating the sum of Hessians for all $fE_i$ gives
\begin{equation}
	\label{eq:hessian-averagded}
	\sum_{i=1}^{n - 2}\mathcal{H}\left[fE_i\right] = - \int_{0}^{\ell} \left((n - 2)f'(\lambda)^2 - f(\lambda)^2R_{\mu\nu}U^{\mu}U^{\nu}\right) d\lambda - (n - 2)U_{\mu}H^{\mu}\Big\vert_{\gamma(0)},
\end{equation}
where $H^{\mu}$ is the mean curvature vector field of $P$. The expression of the right-hand side of 
Eq.~\eqref{eq:hessian-averagded} may be written in invariant form by regarding $f$ as the coordinate expression of a density $f$ of weight $-1/2$ on $\gamma$.

Now we can apply the condition for the formation of a focal point. If
\be
\mathcal{H}[V] \geq 0 \,,
\ee
for any $V$ then there is a focal point to $P$ along $\gamma$. If the inequality holds strictly then the focal point is located before $q$ \cite{o1983semi}. Then we have \cite{Fewster:2019bjg, Kontou:2023ntd}
\begin{prop}
		\label{prop:fp-criteria}
		Let \(P\) be a spacelike submanifold of \(\overline{M}\) of co-dimension \(2\) with mean normal curvature $H^\mu$ and let \(\gamma\) be a null geodesic with tangent vector $U^\mu$ joining \(p \in P\) to \(q\in J^+(P)\). Let $\gamma$ be affinely parametrized by $\lambda \in [0,\ell]$.  If there exists a smooth \((-\frac{1}{2})\)-density \(f\) which is non vanishing at \(p\) but is null at \(q\), and such that
		\begin{equation}
		\label{eq:fp-criteria}
		\int_0^\ell \big((n -2)f'(\lambda)^2 - f(\lambda)^2 R_{\mu \nu} U^\mu U^\nu \big)d\lambda \le -(n -2) f(0)^2 U_\mu H^\mu \big|_{\gamma(0)}
		\end{equation}
		then there is a focal point to \(P\) along \(\gamma\). If the inequality holds strictly then the focal point is located before \(q\).
\end{prop}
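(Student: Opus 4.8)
The plan is to feed the hypothesis \eqref{eq:fp-criteria} into the summed second-variation identity \eqref{eq:hessian-averagded} and then invoke the focal-point criterion quoted above (from \cite{o1983semi}): if some admissible transverse field $V$ satisfies $\mathcal{H}[V]\ge 0$, a focal point to $P$ must occur along $\gamma$, and if the inequality is strict it occurs strictly before $q$. The admissible fields are exactly the $fE_i$ constructed before \eqref{eq:hessian-averagded}: each $E_i$ is the parallel transport along $\gamma$ of an orthonormal basis vector $e_i\in T_{\gamma(0)}P$, so $fE_i$ is tangent to $P$ at $p$, and $f(\ell)=0$ forces it to vanish at $q$, which are precisely the endpoint conditions for a focal point to $P$ that is null at $q$.

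First I would restate \eqref{eq:hessian-averagded} without imposing the normalization $f(0)=1$ that was used when it was originally derived, keeping the factor $f(0)^2$ explicit in the boundary term:
\be
\sum_{i=1}^{n-2}\mathcal{H}[fE_i] = -\int_0^\ell\left((n-2)f'(\lambda)^2 - f(\lambda)^2 R_{\mu\nu}U^\mu U^\nu\right)d\lambda - (n-2)f(0)^2\, U_\mu H^\mu\big|_{\gamma(0)}.
\ee
The hypothesis \eqref{eq:fp-criteria} states precisely that the right-hand side is non-negative, hence $\sum_{i=1}^{n-2}\mathcal{H}[fE_i]\ge 0$. A finite sum of reals is non-negative only if at least one summand is, so there is an index $j$ with $\mathcal{H}[fE_j]\ge 0$; applying the criterion to the single admissible field $V=fE_j$ yields a focal point to $P$ along $\gamma$ in $(0,\ell]$. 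If \eqref{eq:fp-criteria} holds strictly, the same chain gives $\sum_i\mathcal{H}[fE_i]>0$, so $\mathcal{H}[fE_j]>0$ for some $j$, and the criterion then locates the focal point strictly before $q$.

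The remaining work is bookkeeping rather than conceptual, and this is where I expect the only real care to be needed. One must confirm that the boundary term $-U^\mu\nabla_V V_\mu\big|_0^\ell$ in \eqref{eq:hessian} collapses, after summing over the $fE_i$, to the mean-curvature contribution $-(n-2)f(0)^2 U_\mu H^\mu|_{\gamma(0)}$: the $\lambda=\ell$ endpoint drops out because $f(\ell)=0$, while the $\lambda=0$ endpoint produces the trace of the second fundamental form of $P$ along $U$, i.e.\ the mean normal curvature $H^\mu$. One should also note that $f$ may be treated as an ordinary function once a fixed affine parameter is chosen; its density character of weight $-1/2$ is only needed to render the integral inequality \eqref{eq:fp-criteria} invariant under affine reparametrizations $\lambda\mapsto a\lambda+b$, and plays no role in the existence argument itself.
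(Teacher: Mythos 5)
Your proposal is correct and takes essentially the same route as the paper: the paper likewise obtains the proposition by summing the second variation \eqref{eq:hessian} over the parallel-transported frame fields $fE_i$ to arrive at \eqref{eq:hessian-averagded}, observing that the hypothesis \eqref{eq:fp-criteria} makes $\sum_{i}\mathcal{H}[fE_i]\geq 0$, and invoking the focal-point criterion of \cite{o1983semi}. The bookkeeping you flag --- keeping $f(0)^2$ explicit rather than normalizing $f(0)=1$, and extracting a single index $j$ with $\mathcal{H}[fE_j]\geq 0$ to which the criterion is applied --- is exactly what the paper leaves implicit.
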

If $U_\mu H^\mu<0$ and\footnote{Assuming that the outward null direction is given by $U$, a standard computation shows that $U_\mu H^\mu = f \theta_+$ on $P$ for some smooth function $f>0$ (see Eq.~\eqref{eq:MOTS_MITS}).} the NCC holds, a focal point is formed on the null generators of the event horizon \cite{Hawking:1971vc}. As this is forbidden by the causality conditions, we have $U_\mu H^\mu \geq0$ and so the event horizon area always increases. In Ref.~\cite{Kontou:2023ntd} it was noted that the same conclusion can be drawn if
\be
\label{eqn:Jint}
\inf_{f\in C^\infty_{1,0}[0,\ell]} J_\ell[f] \leq 0 \,,
\ee
with 
\be
J_\ell[f]=\int_0^\ell\left((n-2)f'(\lambda)^2 -f(\lambda)^2 R_{\mu \nu}U^\mu U^\nu \right) d\lambda \,,
\ee
and the infimum in \Cref{eqn:Jint} is taken over all smooth $(-\frac{1}{2})$-densities on the interval $[0, \ell]$, $\ell>0$ such that $f(0)=1$ and $f(\ell)=0$. Of course this condition is hiding an energy condition as you need some information for the null energy to deduce the sign of the integral. For now we will keep it general.

We note that a condition of the form of Eq.~\eqref{eqn:Jint} can be compatible with S-DEC. Now assume that our dynamical spacetime settles to a static one\footnote{It could be for example a black hole merger.}. Recall that $(\overline{M}, \overline{g})$ is a static, spherically symmetric spacetime if the metric can be expressed as
\be
\overline{g} = -h(r) dt^2 + f(r) dr^2 + r^2 d\Omega^2,
\ee
where $h(r)$ and $f(r)$ are smooth functions. Additionally, assuming that $(\overline{M}, \overline{g})$ is a black hole spacetime with an event horizon $\mathcal{H}$, the S-DEC can be extended to this class of spacetimes. Specifically, if there exist asymptotically flat spherically symmetric initial data set $(M, g, \mathcal{K})$, $g$ as in \eqref{eq:SphrSymMetric}, such that the initial data with boundary given by $\mathcal{H} \cap M$ and satisfies the S-DEC with respect to the outermost MITS (or $\mathcal{H}\cap M$ in its absence), we say that $(\overline{M}, \overline{g})$ satisfy the S-DEC\footnote{In order to simplify the arguments, we assume that spherically symmetric metrics are given by \eqref{eq:SphrSymMetric} instead of \eqref{eq:SphrMetricds}.}. Now we can state the following theorem

\begin{theorem}
\label{the:classarea}
Suppose that
\begin{enumerate}
    \item $( \overline{M}^4, \overline{g})$ is a strongly asymptotically predictable spacetime.
    \item Let $M_0$ be an asymptotically flat partial Cauchy surface with ADM mass $m_{\mathrm{ADM}}$ for the globally hyperbolic region and containing a MOTS $\Sigma$.
    \item For $\mathcal{H}$ the event horizon, $U$ the tangent field of its null generator $\gamma(\lambda)$, the inequality of Eq.~\eqref{eqn:Jint} holds.\label{item:EnergyCond}
    \item The Bondi mass $m_B$ does not increase to the future and the Bondi mass approaches\footnote{Since gravitational waves carry positive energy~\cite{Bondi1962-rc, Sachs1962-hs}, the Bondi mass cannot increase towards the future. Its convergence to the ADM mass is only known under additional assumptions, see~\cite{Hayward2003, Kroon2003-oh, Huang2007-iy}.} $m_{\mathrm{ADM}}$.\label{item:BondiMass}
\item
One of the following two conditions holds
\begin{enumerate}[label=(\roman*),ref=\theenumi.(\roman*)]
\item\label{item:SettlesStaticSDEC}
 the spacetime settles to a spherically symmetric static black hole satisfying the S-DEC.
\item\label{item:SettlesKerr} the spacetime settles to a Kerr metric.
\end{enumerate}
\end{enumerate}
then 
\be
\label{eqn:pendyn}
2 m_{\mathrm{ADM}} \geq \sqrt{\frac{\Amin(\Sigma)}{4 \pi}},
\ee
where $\Amin(\Sigma)$ is the minimum area required to enclose the $\Sigma$ 
\end{theorem}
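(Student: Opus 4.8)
The plan is to resurrect Penrose's original heuristic argument, replacing every appeal to the pointwise NCC by its averaged surrogate: the focal-point criterion of Proposition~\ref{prop:fp-criteria} for the event horizon, and Theorem~\ref{PI-SADEC} for the stationary end state. Concretely, I would assemble the chain
\be
\Amin(\Sigma) \;\leq\; |H_S| \;\leq\; A_{\mathrm{final}} \;\leq\; 16\pi\,M_{\mathrm{final}}^2 \;\leq\; 16\pi\,m_{\mathrm{ADM}}^2 \,,
\ee
where $H_S := \mathcal{H}\cap M_0$ is the horizon cross section on the initial slice, $A_{\mathrm{final}}$ is the horizon area of the stationary limit, and $M_{\mathrm{final}}$ its mass. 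The desired bound $2m_{\mathrm{ADM}}\geq\sqrt{\Amin(\Sigma)/4\pi}$ is obtained by reading off the two ends of the chain and squaring.

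First I would establish the two geometric links on the left. Since $(\overline{M},\overline{g})$ is strongly asymptotically predictable and assumption~\ref{item:EnergyCond} supplies Eq.~\eqref{eqn:Jint} along the null generators of $\mathcal{H}$, Proposition~\ref{prop:fp-criteria} forbids a focal point on any such generator, because a generator leaving the horizon would contradict the causal structure. Hence $U_\mu H^\mu\geq 0$, the outgoing expansion satisfies $\theta_+\geq 0$, and the event-horizon area is non-decreasing to the future: this is exactly the averaged area theorem of \cite{Kontou:2023ntd}, and it yields $|H_S|\leq A_{\mathrm{final}}$. The same focal-point mechanism, applied to the outgoing null congruence normal to the MOTS $\Sigma$, shows that $\Sigma$ cannot be seen from $\mathcal{I}^+$, so it lies in the black-hole region on $M_0$ and is enclosed by $H_S$; by the definition of the minimal-area enclosure this gives $\Amin(\Sigma)\leq|H_S|$.

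Next I would close the chain at the stationary end using assumption~\ref{item:SettlesStaticSDEC} or \ref{item:SettlesKerr}. In the Kerr case~\ref{item:SettlesKerr}, the horizon area obeys the explicit bound \eqref{eqn:Penarg}, namely $A_{\mathrm{final}}=A_{\mathrm{Kerr}}\leq 16\pi M_{\mathrm{Kerr}}^2$, with $M_{\mathrm{Kerr}}$ identified as the mass of the limit. In the spherically symmetric static case~\ref{item:SettlesStaticSDEC}, I would instead feed the stationary slice, whose boundary is the horizon cross section $\mathcal{H}\cap M$, into Theorem~\ref{PI-SADEC} with $K=0$; for $d=3$ this returns $A_{\mathrm{final}}\leq 16\pi\,m_{\mathrm{ADM,final}}^2$, and staticity identifies $m_{\mathrm{ADM,final}}$ with $M_{\mathrm{final}}$. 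Finally, assumption~\ref{item:BondiMass} supplies the mass link: the Bondi mass is non-increasing to the future and tends to $m_{\mathrm{ADM}}$ in the past, so $M_{\mathrm{final}}=m_B(\mathrm{final})\leq m_{\mathrm{ADM}}$, which is the rightmost inequality.

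The main obstacle I anticipate is the passage from the genuinely dynamical regime to the stationary end state: the averaged area theorem must be applied uniformly along the entire future development and the limiting cross-section area $A_{\mathrm{final}}$ must be controlled, which is precisely what assumptions~\ref{item:BondiMass} and \ref{item:SettlesStaticSDEC}/\ref{item:SettlesKerr} are there to guarantee. A secondary delicate point is that assumption~\ref{item:EnergyCond} is stated only along horizon generators, whereas placing $\Sigma$ behind the horizon really requires the focal-point inequality along the congruence emanating from $\Sigma$; so either this must be read as part of the hypothesis, or one argues that strong asymptotic predictability together with the horizon condition already confines $\Sigma$. Everything else is the standard Penrose assembly, now robust against pointwise violations of the NEC because each classical input has been replaced by its averaged or stationary-endpoint surrogate.
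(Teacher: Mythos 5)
Your proposal is correct and follows essentially the same route as the paper's proof: the energy condition \eqref{eqn:Jint} places $\Sigma$ behind the horizon and yields the averaged area theorem of \cite{Kontou:2023ntd}, the stationary end state is handled by Theorem~\ref{PI-SADEC} with $K=0$ (static case) or the Kerr bound \eqref{eqn:Penarg}, and assumption~4 closes the chain via the Bondi mass. Even the delicate point you flag --- that confining $\Sigma$ behind the horizon needs the focal-point inequality along geodesics emanating from $\Sigma$, not just along horizon generators --- is present in the paper, which resolves it by citing \cite[Proposition A.2]{Kontou:2023ntd}.
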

\begin{proof}
Analogously to the NEC, the energy condition~\ref{item:EnergyCond} implies that MOTSs are contained in the black hole region, that is, $\Sigma$ lies behind the event horizon $\mathcal{H}$~\cite[Proposition A.2]{Kontou:2023ntd}. Since $\Amin(\Sigma)$ is the minimum area required to enclose $\Sigma$, one has $\Amin(\Sigma)\leq |\mathcal{H} \cap M_0|$.

The energy condition~\ref{item:EnergyCond} also ensures that the Hawking black hole area theorem from~\cite[Theorem III.2]{Kontou:2023ntd} applies. Thus, the area of the event horizon is non-decreasing, implying that $|\mathcal{H}\cap M_0| \leq |\mathcal{H}_\infty|$, where $\mathcal{H}_\infty$ denotes the event horizon at the future null infinity. 

Assuming \ref{item:SettlesStaticSDEC}, in the static spacetime any cross section of the horizon is non-expanding. Consequently, in the corresponding initial data set, it is a MOTS, and condition \ref{item:EnergyCond} guarantees that it is an outermost MOTS. The S-DEC for the spacetime ensures that the modified Penrose inequality also holds for the initial data set associated with the spacetime and then it also holds for $\mathcal{H}_\infty$. Even if there is an outermost MITS for this initial data set, we can apply Eq.~\eqref{eq:MITSargument} omitting the $A_{\min}$ due to spherical symmetry, following the same reasoning as in the paragraph before Eq.~\eqref{eq:AminArg}. Moreover, the mass $m_{\mathrm{ADM}}$ obtained with the Penrose inequality of the final black hole coincides with the Bondi mass $m_B$ of the original spacetime. Then
\be
2m_{\mathrm{ADM}}  \geq 2m_{B}  \geq \sqrt{|\mathcal{H}_\infty|/4\pi} \geq \sqrt{|\mathcal{H}\cap M_0|/4\pi} \geq \sqrt{\Amin(\Sigma)/4\pi}\,,
\ee
where we have used the fact that the Bondi mass is less than or equal to the ADM mass, given that condition~\ref{item:BondiMass} holds.
If ~\ref{item:SettlesKerr} holds instead, the first two paragraphs, combined with the original Penrose argument and the inequalities in Eq.~\eqref{eqn:Penarg} yield the result. In both cases the inequality of Eq.~\eqref{eqn:pendyn} holds. 
\end{proof}

\subsection{Evaporating Black Holes}

 While the fully semiclassical description of an $n$-dimensional evaporating black hole has not been developed yet, it is reasonable to assume that the ADM mass of an evaporating black hole spacetime remains constant and thus we can derive a meaningful version of the Penrose inequality. That is because the ADM mass arises as a conserved quantity via the Hamiltonian formulation \cite{Bilal:1993wm}. To ensure that, some approaches \cite{Massar:1999wg} set it as a constant of motion for the dynamical evolution. We emphasize that we remain agnostic about the final state of black hole evaporation and instead examine a spacetime where a horizon is always present but is allowed to shrink. Such a spacetime would describe all stages of evaporation except the final and remain globally hyperbolic.

To allow evaporation we can weaken the condition of Eq.~\eqref{eqn:Jint}. In particular, we do not assume that the infimum of the integral is non-positive. Then the conclusion of the theorem is not that the area of the black hole horizon is non-decreasing but that the rate of change of the area is bounded \cite[Theorem IV.1]{Kontou:2023ntd}
\be \label{eq:evapbound_causality}
\left.\frac{d}{d\lambda}\right|_{\lambda=0}|\mathcal{H}_\lambda|=\int_{\mathcal{H}_0} H^\mu U_\mu \geq-\frac{1}{n-2}\left(\inf _{f \in C_{1,0}^{\infty}[0, \ell]} J_{\ell}[f]\right) \cdot |\mathcal{H}_0|,
\ee
where $\mathcal{H}_\lambda$ is a cross section defined by the null generators at $\lambda$.

It is widely known that whenever a black hole spacetime satisfies the NEC, then any trapped surface $T$ must lie inside the black hole region $B$ \cite{Hawking:1971vc}. Nevertheless, for evaporating black holes this statement is not true anymore, trapped surfaces are allowed to sit outside the event horizon (e.g. the apparent horizon of a Vaidya spacetime that violates the NEC lies outside the event horizon). However, one can prove singularity theorems if the initial condition is strengthened \cite{Fewster:2019bjg}. Instead of requiring a marginally trapped surface a \textit{sufficiently trapped} surface is needed.

As a consequence, the sufficiently trapped surface lies inside the event horizon \cite[Proposition A.2]{Kontou:2023ntd}. Precisely, focal points form along every normal null geodesic $\gamma$, parametrized by $\lambda$, emanating normally from $T$ if, and only if,
\be
\label{eqn:suftrapped}
U_{\mu} H^{\mu} \big|_{\gamma( 0 )} \leq-\frac{1} {n-2} \int_{\gamma} \big( ( n-2 ) f^{\prime} ( \lambda)^{2}-f ( \lambda)^{2} R_{\mu\nu} U^{\mu} U^{\nu} \big) d \lambda\,, 
\ee
holds for any $U^\mu$. This is necessary to prove null geodesic incompleteness. Whenever this condition is satisfied, we say that a trapped surface $T$ is \textit{sufficiently trapped}.

For dynamical horizons, we expect that, in the future, Hawking radiation will begin to dominate over the infall of matter. As a result of this radiation, the area of the event horizon is expected to decrease. Geometrically, this can be characterized by the fact that the cross sections of the event horizon are weakly outer trapped surfaces, i.e., $\theta_+\leq 0$. Inspired by this idea, an initial data set $(M, g, \mathcal{K})$ in a black hole spacetime $(\overline{M}, \overline{g})$ is called an \textit{evaporating initial data set} if its intersection with the event horizon, $\mathcal{H}\cap M$, is a weakly outer trapped surface ($\theta_+\leq0$). Recall that spacetimes satisfying the NEC prohibit $\mathcal{H}\cap M$ to be outer trapped. Thus, evaporating initial data sets are mostly relevant in spacetimes where the event horizon can shrink, potentially due to evaporation effects.

For $d$-dimensional asymptotically flat evaporating initial data sets with $d<8$, it is always possible to find an outermost MOTS, $\Sigma$ enclosing the weakly outer trapped surface $\mathcal{H}\cap M$ \cite{Andersson:2007gy,Eichmair2009, Andersson2011}. For asymptotically flat evaporating initial data sets with spherical symmetry, $\Sigma$ is necessarily a sphere and the S-DEC is defined with respect to the outermost MITS enclosing $\Sigma$ (or $\Sigma$ in absence of it).

\begin{theorem}
\label{the:PIevapo}
Suppose that
\begin{enumerate}
    \item $( \overline{M}^{n}, \overline{g})$, $n<9$, is a strongly asymptotically predictable spacetime with a sufficiently trapped surface $T$.
    \item Let $M_0$ and $M_1$ be $d$-dimensional asymptotically flat partial Cauchy surfaces for the globally hyperbolic region such that $T \subset M_0$ and $M_0 \subset I^-(M_1)$. Further assume that $M_1$ is an evaporating spherically symmetric initial data set with boundary $\mathcal{H}\cap M_1$ and metric as Eq.\eqref{eq:SphrSymMetric}.
\end{enumerate}
If $M_1$ satisfy the S-DEC and there exists $s>0$ such that the cross section $\mathcal{H}_s$ coincides with $\mathcal{H} \cap M_1$, then
\be
\label{eqn:PIevap}
m_{\mathrm{ADM}} \geq \frac{1}{2}\left({\frac{\Amin(T)}{ \omega_{n-2} }}\right)^\frac{n-3}{n-2}  \exp \left(-\frac{1}{2(n-2)}\int_{0}^s \nu (\lambda)d\lambda \right).
\ee
where $\Amin(T)$ is the minimum area required to enclose $T$ and $\nu (\lambda):=\inf_{f \in C_{1, 0}^{\infty} [\lambda, \ell]} J_{\ell} [ f ]$
\end{theorem}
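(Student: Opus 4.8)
The plan is to chain three ingredients: a containment statement placing the sufficiently trapped surface $T$ behind the event horizon, a Grönwall-type integration of the bounded-evaporation-rate estimate \eqref{eq:evapbound_causality} controlling how much horizon area is lost between $M_0$ and $M_1$, and the modified Penrose inequality of \Cref{PI-SADEC} applied on the final slice $M_1$. Throughout I use that the ADM mass is conserved along the globally hyperbolic evolution, so the inequality obtained on $M_1$ is a statement about the single quantity $m_{\mathrm{ADM}}$.

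First I would show that $T$ lies in the black hole region. Since $T$ is sufficiently trapped it satisfies \eqref{eqn:suftrapped}, so focal points form along every null geodesic emanating normally from $T$; together with strong asymptotic predictability this forces $T$ behind $\mathcal{H}$ by \cite[Proposition A.2]{Kontou:2023ntd}, exactly as under the NEC. Hence the cross section $\mathcal{H}_0 := \mathcal{H}\cap M_0$ encloses $T$, and since $\Amin(T)$ is the least area of an enclosing surface, $\Amin(T)\leq|\mathcal{H}_0|$.

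Next I would track the horizon area from $\mathcal{H}_0$ to $\mathcal{H}_s=\mathcal{H}\cap M_1$ along the null generators, parametrized so that $\mathcal{H}_0$ and $\mathcal{H}_s$ are the cross sections at $\lambda=0$ and $\lambda=s$. Re-applying \cite[Theorem IV.1]{Kontou:2023ntd} at each cross section $\mathcal{H}_\lambda$ promotes \eqref{eq:evapbound_causality} to the differential inequality $\frac{d}{d\lambda}\ln|\mathcal{H}_\lambda|\geq-\frac{1}{n-2}\nu(\lambda)$. Integrating from $0$ to $s$ by an integrating-factor argument, and using the containment above, gives
\be
|\mathcal{H}_s| \geq |\mathcal{H}_0|\exp\left(-\frac{1}{n-2}\int_0^s \nu(\lambda)\,d\lambda\right) \geq \Amin(T)\,\exp\left(-\frac{1}{n-2}\int_0^s \nu(\lambda)\,d\lambda\right).
\ee
I would then invoke \Cref{PI-SADEC}: the slice $M_1$ is spherically symmetric, asymptotically flat, and evaporating, so $\mathcal{H}\cap M_1$ is weakly outer trapped and enclosed by an outermost MOTS, while the S-DEC supplies $K=0$ and spherical symmetry collapses the minimal enclosure onto the boundary. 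This yields $m_{\mathrm{ADM}}\geq\frac{1}{2}(|\mathcal{H}_s|/\omega_{n-2})^{(n-3)/(n-2)}$, and substituting the area estimate gives the lower bound \eqref{eqn:PIevap}, the exponential correction emerging from the integrated rate bound.

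The main obstacle is this second step. Estimate \eqref{eq:evapbound_causality} is stated as an instantaneous rate at a single cross section, so promoting it to an inequality valid for all $\lambda\in[0,s]$ requires that the null generators foliate the portion of $\mathcal{H}$ between $M_0$ and $M_1$ by smooth cross sections, that the affine parametrization and the focal-point criterion be uniform along the flow, and that the positive factor relating $U_\mu H^\mu$ to $\theta_+$ not disturb the monotonicity. Verifying these regularity conditions—together with matching the weakly trapped boundary $\mathcal{H}\cap M_1$ to the outermost MOTS demanded by \Cref{PI-SADEC}—is where the real work lies; the surrounding manipulations are bookkeeping that fixes the precise constant in the exponent.
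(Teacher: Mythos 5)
Your proposal follows essentially the same route as the paper's proof: the containment of $T$ behind $\mathcal{H}$ via \cite[Proposition A.2]{Kontou:2023ntd}, the promotion of \eqref{eq:evapbound_causality} to a differential inequality for $\ln|\mathcal{H}_\lambda|$ integrated from $0$ to $s$, and the application of the modified Penrose inequality on $M_1$ through an outermost MOTS $\Sigma$ enclosing the weakly outer trapped boundary $\mathcal{H}\cap M_1$, with spherical symmetry giving $|\mathcal{H}\cap M_1|\leq|\Sigma|$. The only piece you leave implicit is the case where an outermost MITS exists outside $\Sigma$, which the paper dispatches by applying the same argument to the MITS and invoking \eqref{eq:MITSargument}, as in \Cref{the:classarea}.
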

\begin{proof} 
By hypothesis, $T$ satisfies the condition required to establish null geodesic incompleteness, i.e., $T$ is sufficiently trapped \eqref{eqn:suftrapped}. Consequently, $T$ lies behind the event horizon $\mathcal{H}$~\cite{Kontou:2023ntd}, implying that $\Amin(T) \leq |\mathcal{H} \cap M_0|$.

We can extend \eqref{eq:evapbound_causality} to all cross sections ${\mathcal{H}_\lambda}$ by
\be
\frac{d}{d\lambda} |\mathcal{H}_\lambda|=\int_{\mathcal{H}_\lambda} H_\lambda^{\mu} U_{\mu} \geq-\frac{1} {n-2} \nu (\lambda) |\mathcal{H}_\lambda|,
\ee
where $\nu(\lambda)$ is given by
\be
\nu(\lambda):=\inf_{f \in C_{1, 0}^{\infty} [\lambda, \ell]} J_{\ell} [ f ].
\ee
The above inequality can be written as 
\be
\frac{d}{d\lambda} (\ln( |{\mathcal{H}_\lambda}|) \geq-\frac{1} {n-2} \nu (\lambda),
\ee
integrating both sides from any $s>0$ to $0$ and exponentiating, we obtain that the area evolves along the null generators according to the following inequality
\be
|\mathcal{H}_s|  \geq |\mathcal{H}_{0}| \exp\left(-\frac{1}{n-2}\int_{0}^s \nu (\lambda)d\lambda \right).
\ee
Combining that $T$ lies behind the horizon $\mathcal{H}$ together with $\mathcal{H}\cap M_0 \equiv \mathcal{H}_0$, we obtain
\begin{align}
\Amin(T)\leq |\mathcal{H} \cap M_0| & \leq |\mathcal{H}_s| \exp\left(\frac{1}{n-2}\int_{0}^s \nu (\lambda)d\lambda \right)
\end{align}
for all $s>0$.

Since there exists $s>0$ such that $\mathcal{H}_s \equiv \mathcal{H} \cap M_1$, and $\mathcal{H}\cap M_1$ is weakly outer trapped due to $M_1$ being an evaporating initial data set, we can find an outermost MOTS $\Sigma\subset M_1$ enclosing $\mathcal{H} \cap M_1$ \cite{Andersson:2007gy,Eichmair2009, Andersson2011}. Momentarily, suppose that there is no outermost MITS outside $\Sigma$. Applying the modified Penrose inequality, we obtain the follow chain of inequalities,
\bea
\Amin(T)\leq |\mathcal{H} \cap M_0| & \leq & |\mathcal{H} \cap M_1| \exp \left(\frac{1}{n-2}\int_{0}^s \nu (\lambda)d\lambda\right) \leq |\Sigma|\exp\left(\frac{1}{n-2}\int_{0}^s \nu (\lambda)d\lambda\right) \nonumber \\
&\leq & \omega_{n-2} (2 m_{\mathrm{ADM}})^{\frac{n-2}{n-3}} \exp\left(\frac{1}{n-2}\int_{0}^s \nu (\lambda)d\lambda\right),
\eea
where $|\mathcal{H} \cap M_1|\leq |\Sigma|$ holds since both are spheres and the metric is given by $\eqref{eq:SphrSymMetric}$. Rearranging the terms, we have
\be
m_{\mathrm{ADM}} \geq \frac{1}{2}\left({\frac{\Amin(T)}{ \omega_{n-2} }}\right)^\frac{n-3}{n-2}  \exp \left(-\frac{1}{2(n-2)}\int_{0}^s \nu (\lambda)d\lambda \right)\,,
\ee
which completes the proof for this case. Whenever there is an outermost MITS outside $\Sigma$, we can apply the same argument to the outermost MITS and use the inequality \eqref{eq:MITSargument} without $A_{\min}$, as justified in \Cref{the:classarea}.
\end{proof}

We should briefly comment on the meaning of this result and its differences from the original Penrose derivation. First, the surface considered is not a MOTS but a sufficiently trapped surface. Second, we compare two initial data sets at different times, not the end of the evaporation process (as this remains outside the realm of semiclassical gravity). Third, the inequality now includes information about the horizon evolution through the $\nu(\lambda)$. Nevertheless, the result shows that if the sufficiently trapped surface, required for geodesic incompleteness, lies inside the event horizon then a modified Penrose inequality must hold between initial data sets as a test of weak cosmic censorship.

In the next section, we will use this result to bound the area of sufficiently trapped surfaces in terms of the ADM mass, assuming an energy condition that could be satisfied by quantum fields.

\section{Applications}
\label{sec:applications}

\subsection{A condition that prevents evaporation}

We can replace the NCC with an average energy condition of the form 
\be
\label{eqn:classcond}
\int_{\gamma} f ( \lambda)^{2} R_{\mu \nu} U^\mu U^\nu d \lambda\geq( n-2 ) \| f^{\prime} \|^{2} \,.
\ee
According to \cite{Kontou:2023ntd}, this is the weakest condition that does not allow the horizon area to decrease and so forbids black hole evaporation\footnote{At least when one considers the proof using the formation of focal points}.  

Specifically, in Ref.~\cite{Kontou:2023ntd} it was shown that a condition of the form of \eqref{eqn:classcond} is sufficient to prove Eq.~\eqref{eqn:Jint}. Then we have the following corollary: 

\begin{corollary}
    If in Theorem~\ref{the:classarea}, in condition $3$, we replace Eq.~\eqref{eqn:Jint} with Eq.~\eqref{eqn:classcond}, the theorem holds.  
\end{corollary}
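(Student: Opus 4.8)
The plan is to show that condition \eqref{eqn:classcond} feeds directly into hypothesis~3 of \Cref{the:classarea}, after which nothing in that theorem's proof needs to change. First I would observe that \eqref{eqn:classcond} and the functional $J_\ell[f]$ are literally the same object rearranged: writing $\|f'\|^2 = \int_0^\ell f'(\lambda)^2\,d\lambda$ (the $L^2$ norm on the affine interval $[0,\ell]$ parametrizing $\gamma$), the inequality \eqref{eqn:classcond} is precisely the statement
\be
(n-2)\int_0^\ell f'(\lambda)^2\,d\lambda - \int_\gamma f(\lambda)^2 R_{\mu\nu}U^\mu U^\nu\,d\lambda \leq 0 \,,
\ee
that is, $J_\ell[f]\leq 0$ for the test density $f$ appearing in \eqref{eqn:classcond}. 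This is exactly the implication already recorded in the text preceding the corollary, following \cite{Kontou:2023ntd}.

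Next I would match the function classes. The condition \eqref{eqn:classcond} is imposed over the same class of $(-\frac{1}{2})$-densities $f\in C^\infty_{1,0}[0,\ell]$ (those normalized by $f(0)=1$, $f(\ell)=0$) over which the infimum in \eqref{eqn:Jint} is taken. Since $J_\ell[f]\leq 0$ holds for such an $f$, the infimum over the whole class can only be smaller, so
\be
\inf_{f\in C^\infty_{1,0}[0,\ell]} J_\ell[f] \leq 0 \,,
\ee
which is exactly \eqref{eqn:Jint}. Thus condition~3 of \Cref{the:classarea} is satisfied. Because \eqref{eqn:Jint} enters that theorem in only two places—confining the MOTS behind the horizon via \cite[Proposition A.2]{Kontou:2023ntd} and invoking the area theorem of \cite[Theorem III.2]{Kontou:2023ntd}—the entire downstream chain of inequalities, under either hypothesis \ref{item:SettlesStaticSDEC} or \ref{item:SettlesKerr}, goes through verbatim, and \eqref{eqn:pendyn} follows.

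There is no substantive obstacle here; the work is purely bookkeeping. The only point requiring genuine care is to confirm that the boundary normalization and the test-function class in \eqref{eqn:classcond} coincide with the $C^\infty_{1,0}[0,\ell]$ data used to define $J_\ell$, and that $\|f'\|$ is understood as the $L^2$ norm on $[0,\ell]$ (equivalently along $\gamma$) so that the two integrands align term by term. With these identifications the implication \eqref{eqn:classcond}$\Rightarrow$\eqref{eqn:Jint} is immediate, and since replacing hypothesis~3 is the only modification to \Cref{the:classarea}, the corollary follows at once.
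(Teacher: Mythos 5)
Your proposal is correct and follows essentially the same route as the paper: the paper's justification is precisely that \eqref{eqn:classcond} implies \eqref{eqn:Jint} (citing \cite{Kontou:2023ntd}), after which Theorem~\ref{the:classarea} applies unchanged. You have merely unpacked that implication explicitly—rearranging \eqref{eqn:classcond} as $J_\ell[f]\leq 0$ for an admissible $f\in C^\infty_{1,0}[0,\ell]$ and passing to the infimum—which is exactly the content of the cited result.
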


This theorem shows that the original Penrose argument holds for weaker conditions than the NCC. However the condition of Eq.~\eqref{eqn:classcond} is not generally obeyed by quantum fields. The bound is always nonnegative thus incompatible with quantum field theory \cite{Epstein1965}.

\subsection{A condition inspired by QEIs}

The authors of \cite{Kontou:2023ntd} proposed conditions inspired by Quantum Energy Inequalities (QEIs) that involve compactly supported densities and the Ricci curvature. We will briefly introduce these conditions to obtain bounds on the evaporation rate of the black hole and apply them to our Penrose-type inequality bound, as well as revisit the examples presented in their work. For a detailed explanation of the conditions and examples, we refer to \cite{Fewster:2019bjg,Kontou:2023ntd}.

Let $P\subset \overline{M}$ be a spacelike submanifold of codimension 2 with mean curvature vector field $H_{\mu}$. For every point $p$ such that $P$ is future-converging, i.e., $H^\mu(p)$ is past-pointing timelike everywhere on $P$, we may write $H^\mu = H \hat{H}^\mu$, where $H<0$ and $\hat{H}^\mu$ is future-pointing timelike unit vector. For any future-direct null geodesic $\gamma$ emanating normally from $P$, we can fix\footnote{Fixing an affine parametrization simplifies the presentation of the energy condition, the invariant form and further details can be found in \cite[Section 4.2]{Fewster:2019bjg}.} an affine coordinate $\lambda$ on $\gamma$, such that $p = \gamma(0)$ and $\hat{H}_{\mu} d\gamma^{\mu}/d\lambda=1$. Then for every smooth compactly supported $(-1/2)$-density $g$ on $\gamma$, and every choice of smooth nonnegative constants $Q_0(\gamma)$ and $Q_1(\gamma)$ we assume
\be
\label{eqn:QEI}
\int_{-\infty}^{\infty} g(\lambda)^2 R_{\mu\nu}U^{\mu}U^{\nu}\, d\lambda \ge -Q_1(\gamma) \vert\vert g^\prime \vert\vert^2 - Q_0(\gamma) \vert\vert g\vert\vert^2 \,.
\ee
We would like to apply this inequality to compute $\nu(\lambda)$ in Eq.~\eqref{eq:evapbound_causality}. However, the condition for the formation of focal points and the above inequality are not directly compatible. The smooth function $f \in C^\infty_{1,0}$ in \eqref{eq:evapbound_causality} does not satisfy the same boundary conditions as $g$ in \eqref{eqn:QEI}. This issue can be resolved by setting $g = f \varphi$, where $\varphi$ is a smooth compactly supported function such that $\varphi(\ell_0)=1$ for some $0<\ell_0<\infty$. However, introducing this additional function imposes an extra requirement: we must estimate the null energy for $0 \leq \lambda \leq \ell_0$. To simplify the analysis, we assume a pointwise condition
\be
\label{eqn:pointwise}
R_{\mu \nu} U^\mu U^\nu\big|_{\gamma(\lambda)} \geq \rho_0\,, \quad\quad \forall\lambda\in [0, \ell_0] \,,
\ee
where $\rho_0 \in \mathbb{R}$. Since $\rho_0$ is allowed to be negative for finite values of the affine parameter on the geodesic, this condition is still less restrictive than the NEC.

Now we have to pick the functions $f$ and $\varphi$. A computation, following \cite[Section 4.2]{Kontou:2023ntd}, shows that\footnote{In Ref.~\cite{Kontou:2023ntd} there is also a computation for general $n$ and general number of derivatives.}
\begin{equation}
J_{\ell}[f] \leq-\rho_0 \ell_0+\rho_0\|\varphi\|^2+(n-2)\left\|f^{\prime}\right\|^2+Q_1(\gamma)\left\|g^{\prime}\right\|^2+Q_0(\gamma)\|g\|^2,
\end{equation}
where $g = f\varphi$. Moreover, by solving an optimization problem, they showed that it is possible to find $f$ and $\varphi$ such that $J_\ell[f] \leq \nu(Q_0(\gamma),Q_1(\gamma), \ell, \ell_0, \rho_0)$ (see \cite[Theorem 4.3]{Kontou:2020bta}). 

Taking $\ell\rightarrow \infty$, $\rho_0<0$, and optimizing for $\ell_0$, they obtained a bound for four-dimensional ($n=4$) black hole spacetimes independent of $\ell$ and $\ell_0$. Additionally, whenever $Q_0(\gamma)\leq Q_0$ and $Q_1(\gamma)\leq Q_1$ for constants $Q_0$ and $Q_1$ independent of $\gamma$, and the lower bound \eqref{eqn:pointwise} holds along every null geodesic, $\nu$ can be chosen to be independent of $\gamma$. Hence, we have
\be
\left.\frac{d}{d\lambda}\right|_{\lambda=0}|\mathcal{H}_\lambda|=\int_{\mathcal{H}} H^{\mu} U_{\mu} \ge - \frac{1}{2}\nu_{\text{opt}}(Q_1,Q_0,\rho_0)\cdot |\mathcal{H}_0| \,,
\ee
where
\bea
\label{eqn:nuopt}
\nu_{\text{opt}}(Q_0,Q_1,\rho_0)&=&\sqrt{Q_0(2+Q_1)}+\sqrt{Q_1 Q_0} \nonumber \\
&&+|\rho_0| \sqrt{\frac{Q_1}{Q_0+\rho_0}}\arcsinh{\left(\sqrt{\frac{Q_0 + \rho_0}{|\rho_0|}}\right)} \,.
\eea
The constants $Q_0(\gamma)$ and $Q_1(\gamma)$ encode the specifics of the matter model through the relevant energy condition.  The pointwise bound $\rho_0$ is generally undetermined as quantum fields have no pointwise lower bounds for all states, but estimates can be obtained close to the horizon, see next section. An immediate corollary of \Cref{the:PIevapo} is the following.

 \begin{corollary}
 \label{cor:PIevapoqei}
Suppose that
\begin{enumerate}
    \item $( \overline{M}^{4}, \overline{g})$ is a strongly asymptotically predictable spacetime with a sufficiently trapped surface $T$.
    \item Let $M_0$ and $M_1$ be $d$-dimensional asymptotically flat partial Cauchy surfaces for the globally hyperbolic region such that $T \subset M_0$ and $M_0 \subset I^-(M_1)$. Further assume that $M_1$ is an evaporating spherically symmetric initial data set with boundary $\mathcal{H}\cap M_1$ and metric as Eq.\eqref{eq:SphrSymMetric}.
    \item For $\mathcal{H}$ the event horizon, $U$ the tangent field of its null generator $\gamma(\lambda)$, the following inequality    \be 
    \int_0^{\infty} g (\lambda)^2 R_{\mu\nu}U^{\mu}U^{\nu} \ge -Q_1 \vert\vert g^\prime \vert\vert^2 - Q_0 \vert\vert g\vert\vert^2 \,
    \ee
    holds with $Q_1(\gamma)\leq Q_1$ and $Q_0(\gamma)\leq Q_0$. Additionally, there exists a finite constant $\rho_0<0$ such that along every null generator
    \be
    R_{\mu \nu} U^\mu U^\nu\big|_{\gamma(\lambda)} \geq \rho_0\,, \quad\quad \forall\lambda\in [0, \ell_0] \,,
    \ee
    for some $\ell_0>0$.
\end{enumerate}
If $M_1$ satisfy the S-DEC and there exists $s>0$ such that the cross section $\mathcal{H}_s$ coincides with $\mathcal{H} \cap M_1$, then 
\be
m_{\mathrm{ADM}} \geq\sqrt{\frac{\Amin(T)}{16\pi}}  \exp\left(-\frac{1}{4}\nu_{\text{opt}}(Q_1,Q_0,\rho_0)\cdot s \right) .
\ee
where $\Amin(T)$ is the minimum area required to enclose $T$ and 
 $\nu_{\text{opt}}$ is given by \eqref{eqn:nuopt}.
\end{corollary}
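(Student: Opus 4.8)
The plan is to read this as a direct specialization of Theorem~\ref{the:PIevapo} to $n=4$, so that the only genuinely new work is to convert the QEI-type hypotheses of condition~3 into a uniform bound on the quantity $\nu(\lambda)$ appearing in the theorem's conclusion. First I would note that conditions~1 and~2 of the corollary are verbatim those of Theorem~\ref{the:PIevapo} with $n=4$, and that $M_1$ satisfies the S-DEC with a cross section $\mathcal{H}_s=\mathcal{H}\cap M_1$. Hence the theorem applies unchanged and yields
\be
m_{\mathrm{ADM}} \geq \frac{1}{2}\left(\frac{\Amin(T)}{\omega_{2}}\right)^{1/2}\exp\left(-\frac{1}{4}\int_0^s \nu(\lambda)\,d\lambda\right),
\ee
where $\nu(\lambda)=\inf_{f\in C^\infty_{1,0}[\lambda,\ell]} J_\ell[f]$. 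Since $\omega_2=4\pi$, the prefactor is exactly $\sqrt{\Amin(T)/16\pi}$, so it remains only to control the exponential, i.e.\ to bound $\int_0^s \nu(\lambda)\,d\lambda$ from above.

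The core step is to show $\nu(\lambda)\leq \nu_{\mathrm{opt}}(Q_0,Q_1,\rho_0)$, and this is where I would lean entirely on the analysis already recalled before the statement. For each fixed $\lambda$ I would build a competitor density $g=f\varphi$, with $\varphi$ smooth and compactly supported normalized by $\varphi(\ell_0)=1$, precisely to reconcile the boundary data of the focal-point criterion ($f\in C^\infty_{1,0}$) with those of the QEI test function ($g$ compactly supported). Inserting this into $J_\ell[f]$ and applying the QEI of condition~3 together with the pointwise bound $R_{\mu\nu}U^\mu U^\nu\geq\rho_0$ on $[0,\ell_0]$ reproduces the estimate
\be
J_{\ell}[f] \leq -\rho_0 \ell_0+\rho_0\|\varphi\|^2+(n-2)\|f'\|^2+Q_1\|g'\|^2+Q_0\|g\|^2 ,
\ee
exactly as in \cite{Kontou:2023ntd}. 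Taking $\ell\to\infty$ and carrying out the same optimization over $f$, $\varphi$ and $\ell_0$ that produces \eqref{eqn:nuopt} then gives $\nu(\lambda)\leq \nu_{\mathrm{opt}}(Q_0,Q_1,\rho_0)$.

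The point requiring care is uniformity in $\lambda$: the closed form \eqref{eqn:nuopt} is derived for a generator parametrized from its base point, whereas $\nu(\lambda)$ is an infimum over densities on $[\lambda,\ell]$. Because the constants $Q_0,Q_1,\rho_0$ are assumed along \emph{every} null generator and are independent of $\gamma$, I would reparametrize each generator so that $\lambda$ becomes the new origin; the hypotheses are preserved under this affine shift, so the same optimization yields $\nu(\lambda)\leq\nu_{\mathrm{opt}}$ for every $\lambda\in[0,s]$. Since $\nu_{\mathrm{opt}}$ is then a $\lambda$-independent constant, $\int_0^s \nu(\lambda)\,d\lambda\leq s\,\nu_{\mathrm{opt}}$, and monotonicity of $x\mapsto e^{-x/4}$ upgrades the theorem's conclusion to
\be
m_{\mathrm{ADM}} \geq \sqrt{\frac{\Amin(T)}{16\pi}}\exp\left(-\frac{1}{4}\,\nu_{\mathrm{opt}}(Q_1,Q_0,\rho_0)\,s\right).
\ee

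The main obstacle is thus bookkeeping rather than conceptual. All the analytic content — the focal-point Hessian identity \eqref{eq:hessian-averagded}, the $g=f\varphi$ splitting needed to match boundary conditions, and the explicit optimization producing \eqref{eqn:nuopt} — is imported from \cite{Kontou:2023ntd}, and the area-evolution bound that feeds into Theorem~\ref{the:PIevapo} is independent of any energy condition. The only genuinely new verifications are that condition~3's hypotheses slot into the framework of Theorem~\ref{the:PIevapo} and that the bound $\nu(\lambda)\leq\nu_{\mathrm{opt}}$ survives uniformly under the reparametrization; once these are in hand the estimate follows by substitution.
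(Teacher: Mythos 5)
Your proposal is correct and follows essentially the same route as the paper: the corollary is obtained as an immediate specialization of Theorem~\ref{the:PIevapo} to $n=4$ (with $\omega_2=4\pi$ giving the $\sqrt{\Amin(T)/16\pi}$ prefactor), combined with the uniform bound $\nu(\lambda)\leq\nu_{\mathrm{opt}}(Q_0,Q_1,\rho_0)$ supplied by the $g=f\varphi$ splitting, the pointwise bound on $[0,\ell_0]$, and the optimization imported from \cite{Kontou:2023ntd}, exactly as recalled in the text preceding the statement. One cosmetic caveat: replacing $\int_0^s\nu(\lambda)\,d\lambda$ by the larger quantity $s\,\nu_{\mathrm{opt}}$ \emph{weakens} rather than ``upgrades'' the theorem's conclusion, but the implication runs in the right direction, so your argument stands.
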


\subsection{Specific models}

To apply the above result, one needs estimates for $\rho_0$, $Q_0$ and $Q_1$. Estimates of the null energy density around a black hole horizon may suggest a reasonable value for $\rho_0$. For this, we use the numerical results of Levi and Ori \cite{LeviOri2016} for the renormalized stress-energy tensor of a minimally coupled scalar field near the horizon of a Schwarzschild black hole. They find the value of the null energy on the $r=3M$ geodesics given by
\be
\label{eqn:rhozerofin}
\rho_0\approx- 0.1 \frac{k^4}{\hbar^3 c^3} T^4 \,.
\ee
in SI units, where $k$ is the Boltzmann constant.

\subsubsection{The non-minimally coupled classical Einstein–Klein–Gordon theory}
A classical example that violates the NEC is the non-minimally coupled scalar field \cite{Kontou:2020bta}. These scalar fields are described by the Lagrangian density
\be
\mathcal{L}[\phi] = -\frac{1}{2}\left[\nabla_{\mu}\phi\nabla^{\mu}\phi   - (m^2 - \xi R)\phi^2\right] \,,
\ee
where $\xi$ is the coupling constant and $m$ the mass of the field. 

Given certain constraints and the numerical value of $\rho_0$, it is possible to obtain a reasonable estimate for $\nu_{\text{opt}}$ \cite{Kontou:2023ntd} based on a method developed in \cite{Brown:2018hym}. Assume that the non-minimally coupled scalar field obeys the bound $\phi^2 < 1/8\pi\xi$ and that the coupling constant is $\xi \in [0,1/4]$, where $1/4$ is always smaller that the conformal coupling $\xi_c= 1/6$. Under these conditions, and assuming the Einstein's equations are satisfied, the field satisfies the energy condition with $Q_0=Q$ and $Q_1=Q \tilde{Q}^2$ independent of $\gamma$
\be
        \int_{\gamma}g^2 R_{\mu\nu}U^{\mu}U^{\nu} \ge -Q\left(\|g'\|^2 + \tilde{Q}^2 \vert\vert g\vert\vert^2\right) \,,
\ee
    where 
\be
    Q = \frac{32\pi\xi\phi_{\max}^2}{1 - 8\pi\xi\phi_{\max}
    ^2}\,,
    \quad\quad
    \tilde{Q} = \frac{8\pi\xi\phi_{\max}\phi'_{\max}}{1 - 8\pi\xi\phi_{\max}^2} \,.
\ee
and
\be
\phi_{\max}=\sup_{\gamma}|\phi|\,, \quad \phi'_{\max}=\sup_{\gamma}\vert \phi'(\lambda) \vert \,.
\ee
In order to estimate the values of $\phi_{\max}$ and $\phi'_{\max}$ we connect the scale of the field magnitude with a temperature by following the hybrid approach of Refs.~\cite{Brown:2018hym, Fewster:2019bjg, Kontou:2023ntd}. We take for $\phi_{\max}$ as the value of the Wick square, the renormalized expectation value of the two-point function $\langle \nord{\phi^2} \rangle_\omega$ in Minkowski spacetime, where \(\omega\) is a thermal equilibrium KMS state \cite{haag2012local}, and connect $\phi_{\max}$ to a temperature $T$. Then for $n=4$ we have
\be
\phi_{\max}^2\sim \frac{T^2}{12} \,, \qquad (\phi'_{\max})^2\sim \frac{2\pi^2 T^4}{45} \,.
\ee
Together with the numerical value of $\rho_0$, it was found that
\be
\nu_{\text{opt}}(T,\xi)=\sqrt{\frac{\pi^3}{15}} \sqrt{\xi} \left(\frac{k}{\hbar T_{\text{pl}}^2} \right)T^3+\mathcal{O}(T/T_{\text{pl}})^4 \,.
\ee
is the Planck temperature, $T$ a temperature related with $\phi$ and $k$ is the Boltzmann constant. Then, for non-minimally coupled scalar fields, we have the following Penrose inequality\footnote{Here we have restored the units inside the exponential.},
\be
m_{\mathrm{ADM}} \gtrsim \sqrt{\frac{\Amin(T)}{16\pi}}  \exp \left(-\frac{s}{4}\sqrt{\frac{\pi^3}{15}} \sqrt{\xi} \left(\frac{k}{\hbar T_{\text{pl}}^2} \right)T^3 \right)   .
\ee
It is important to emphasize that this largely relies on the assumptions made regarding the pointwise energy condition and $\rho_0$. Furthermore, the energy condition pertains to a classical field, only providing an analogy for how this condition might manifest in a quantum non-minimally coupled scalar field. Consequently, this result primarily demonstrates how our theorem can provide a version of the Penrose inequality in cases where the NEC is violated.

\subsubsection{The smeared null energy condition}

In this example we will use the smeared null energy condition (SNEC) \cite{Freivogel:2018gxj,Freivogel:2020hiz} as our energy assumption. As it was shown in Ref.~\cite{Fewster:2002ne} there are no lower bounds of the null energy in quantum field theory when the integral is over finite segments of a null geodesic. This issue was circumvented in \cite{Freivogel:2018gxj} where a $UV$ cutoff was introduced. Then the lower bound of the inequality depends on this cutoff. The SNEC, obeying the semiclassical Einstein equation \eqref{eqn:SEE}, can be written as
\be
\label{eqn:snec}
        \int_{\gamma}g^2(\lambda) R_{\mu\nu}U^{\mu}U^{\nu} \ge -32\pi B \|g'\|^2  \,,
\ee
which is exactly of the form of Eq.~\eqref{eqn:QEI}.

The SNEC has been proven for minimally coupled quantum scalar fields only for Minkowski spacetime \cite{Fliss:2021gdz}. The value of $B$ depends on the cutoff scale $\luv$ as
\be
\label{eqn:GNUV}
G_N \lesssim \luv^{n-2} \,.
\ee 
When this inequality is saturated, the $\luv$ is the Planck length scale, and $B=1/32\pi$, which is its maximum value. For $\luv \gg \ell_{\text{pl}}$, for example in the case of an effective field theory, $B \ll 1$. Thus $B$ can be thought as a dimensionless proportionality constant of Eq.~\eqref{eqn:GNUV}. For more details see Ref.~\cite{Freivogel:2020hiz}.

Using this inequality as well as the estimated value of $\rho_0$, we have for $\nu_{\text{opt}}$ of Eq.~\eqref{eqn:nuopt}
\be
\nu_{\text{opt}}(B,T)=\frac{2\sqrt{2\pi^3 }}{\sqrt{5}} \sqrt{B} \left(\frac{k}{\hbar T_{\text{pl}}}\right) T^2 \,.
\ee
Then, we obtain the following Penrose inequality.
\be
\label{eqn:snecpi}
m_{\mathrm{ADM}} \geq \sqrt{\frac{\Amin(T)}{16\pi}}  \exp \left(-\frac{s}{4}\frac{2\sqrt{2\pi^3 }}{\sqrt{5}} \sqrt{B} \left(\frac{k}{\hbar T_{\text{pl}}}\right) T^2 \right) \,.
\ee

\begin{figure}
\centering     
\subfigure[Exponential factor varying with temperature ($B=1/3200 \pi$).]{\label{fig:a}\includegraphics[width=7cm]{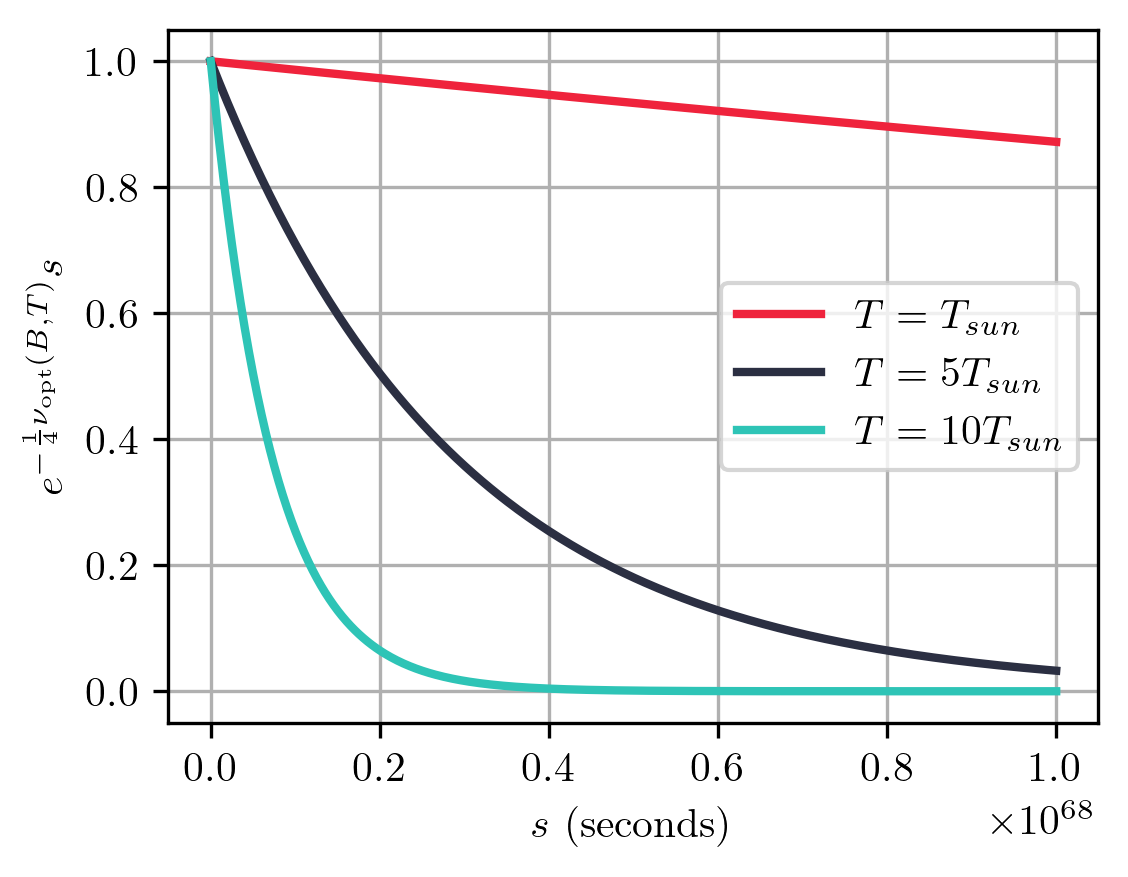}}
\subfigure[Exponential factor varying with the constant $B$ ($T=T_{\text{sun}}$).]{\label{fig:b}\includegraphics[width=7cm]{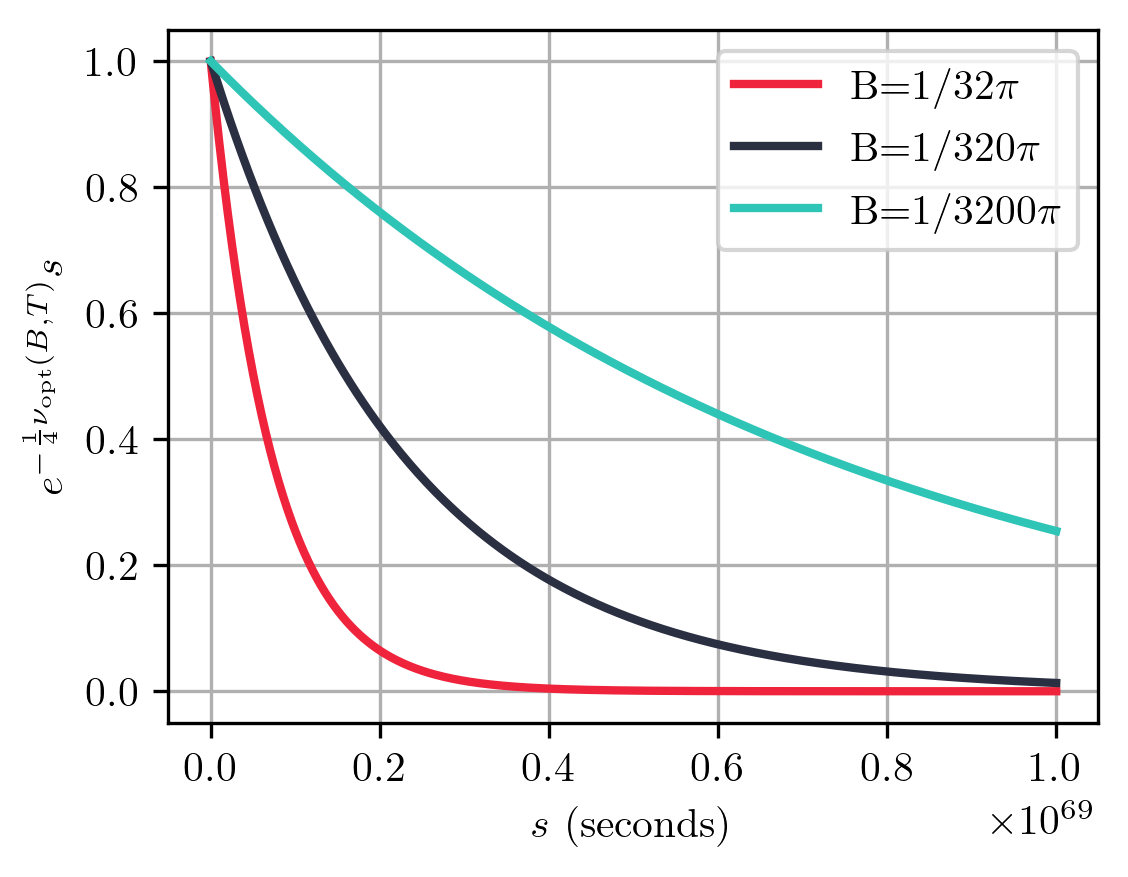}}
\caption{The figures show how much the inequality of Eq.~\eqref{eqn:snecpi} differs from the original Penrose one in terms of the affine distance of the two initial data sets ($s$). We see that with increasing temperature and $B$ (both meaning more negative energy allowed) the factor goes to zero faster. As reference temperature ($T_{\text{sun}}$) we are using the Hawking temperature of a solar mass black hole. Most astrophysical black holes would have a lower temperature.}
\label{fig:snec}
\end{figure}

To examine how this exponential factor differs from $1$ (the original Penrose inequality) we plot it versus $s$ for different temperatures and values of $B$ in Fig.~\ref{fig:snec}. From Eqs.~\eqref{eqn:snec} and \eqref{eqn:rhozerofin}, we see that larger values of $B$ and $T$ lead to more allowed negative energy. We see that this is translated to faster decay of the exponential to zero. We should note that we are using the assumption that the initial data set satisfies the S-DEC thus there is no possibility of negative ADM mass. 

\subsubsection{Use as a test of the weak cosmic censorship conjecture}

We conclude this section by briefly commenting how Theorem~\ref{the:PIevapo} could be used as a test for the weak cosmic censorship conjecture. First we recall how the global Penrose inequality \eqref{eqn:Penroseoriginal} could be used: one can imagine that the cross section of the event horizon and the ADM mass of a black hole spacetime are known but nothing else. In particular it is not known if the NCC holds. Then the relationship between the cross section of the event horizon and the ADM mass can be examined, and if it follows the Penrose inequality then no violation of the weak cosmic censorship conjecture is detected even though energy conditions were not assumed.

To do something similar with Theorem~\ref{the:PIevapo} we need an earlier expression of Eq.~\eqref{eqn:PIevap}, in particular
\be
m_{\mathrm{ADM}} \geq \sqrt{{\frac{|\mathcal{H} \cap M_0|}{ 16\pi}}} \exp \left(-\frac{1}{4} \nu_{\text{opt}} \cdot s \right)\,,
\ee
which follows from the proof of the theorem, assuming $4$-dimensions and the optimization of $\nu$ explained previously.

Now let's assume that one has the cross section of the event horizon of an evolving evaporating initial data set and the ADM mass of the spacetime but no energy conditions. To do a similar comparison we need to employ a QEI to estimate the value of $\nu$ over $s$. Then we would have a curve over $s$ similar to Fig.~\ref{fig:snec}. Testing if the evolution of the relationship between the ADM mass of the spacetime and the cross section of the event horizon is above or below that curve would give whether or not the modified Penrose inequality holds. If it does, no violation of the weak cosmic censorship conjecture is detected as the sufficiently trapped surface would be behind the event horizon. One disadvantage compared to the original Penrose inequality is that an energy condition is needed to determine the evolution of event horizon. This approach is also inviting the same criticism as Eq.~\eqref{eqn:Penroseoriginal}: the event horizon is very difficult to measure before the full evolution. Thus, even though QEIs for different fields exist, it would not be practical to test the expression against evaporating black hole spacetimes.

Due to that inability to know the future evolution of a spacetime, the original Penrose inequality is more useful in the form of Eq.~\eqref{eqn:Penroselocal} where the area of a trapped surface is instead used. The Eq.~\eqref{eqn:PIevap} could also be used as a test for the existence of an event horizon (as a consequence of the causal structure of spacetime). The idea is the following: one could examine a black hole model that includes a trapped surface leading to a singularity. Now an observation of violation of the inequality of Eq.~\eqref{eqn:PIevap} with an appropriate QEI would lead to the conclusion of possible non-existence of an event horizon. This follows as any sufficiently trapped surface with the correct causal structure would be behind the event horizon.

\section{Discussion}
\label{sec:discussion}

In this work we showed that the Penrose inequality, or a modified version of it, can be proven under averaged energy conditions. This was shown both in the case of the initial data set formulation and in the case of the global Penrose inequality where a connection with the weak cosmic censorship is made. The latter case applies for evaporating black holes where the area of two initial data sets at different times is compared. 

In both cases, the more negative energy allowed, the further the departure from the original Penrose inequality. However in the case of evaporating black holes, a version of the weak cosmic censorship is restored: using the idea of the sufficiently trapped surface, a surface that leads to null geodesic incompleteness, this surface is always behind an event horizon. Furthermore, it can lead to a modified Penrose inequality that can be used as a test of this notion.

For the modified Penrose inequality we use two energy conditions, a QEI--inspired condition that could be obeyed by quantum fields and the S-DEC. The theorem could be easily generalized for the S-ADEI, however neither the S-DEC or the S-ADEI is immediately obeyed by quantum fields. An issue that arises is that quantum fields do not generally obey spacelike integrated bounds \cite{FordHelferRoman2002}. A way forward is to consider only states that solve the semiclassical Einstein equation and generate asymptotically flat spacelike hypersurfaces. Bounds for those kinds of states have not been yet derived.

It would of course be an important extension of this work to prove a Penrose inequality with one QEI-inspired condition instead of two. However, these conditions are independent from one another. A possible approach could be a worldvolume bound, an integral of the dominant energy condition over a spacetime region. The Hawking singularity theorem has been proven \cite{Graf:2022mko} with such a condition but it is unclear if a version of the Penrose inequality could also be derived. 

The positive mass (or energy) theorem, stating the positivity of a version of the global mass, is closely related to the Penrose inequality and the weak cosmic censorship conjecture. There are various proofs, with most important ones by Schoen and Yau \cite{Schon:1979rg, Schon:1981vd} and by Witten \cite{Witten:1981mf}. Both versions and most subsequent proofs use the DEC. In addition, a positive mass theorem allowing pointwise negative scalar curvature was also established by \cite{LLR2022, LUY2024}, where negativity is compensated for by large pointwise positive scalar curvature on an annulus. Here we derive the positive mass theorem for spherically symmetric spacetimes that satisfy a version of the S-ADEI in Corollary~\ref{cor:lowerboundK}. So perhaps in more general spacetimes, it would be possible to prove the positive mass theorem without a pointwise condition but instead an average form of the DEC. This remains an objective for future work.

\vspace{-0.11in}

\begin{acknowledgments}
We thank Melanie Graf and Andrew Svesko for useful discussions. EH acknowledges the support of the German Research Foundation through the excellence
cluster EXC 2121 ``Quantum Universe'' – 390833306. 
\end{acknowledgments}

\newpage

\bibliographystyle{utphys}
\bibliography{references}

\end{document}